\def\E{{\mathbb E}}
\def\R{{\mathbb R}}
\def\PP{{\mathbb P}}
\def\Z{{\mathcal Z}}
\def\A{{\mathcal A}}
\def\F{{\mathcal F}}
\def\FF{{\mathbb F}}
\newtheorem{theorem}{Theorem}
\newtheorem{corollary}[theorem]{Corollary}
\newtheorem{definition}[theorem]{Definition}
\theoremstyle{definition}
\newtheorem{remark}[theorem]{Remark}
\title{Mean field and n-agent games for optimal investment under relative performance criteria}
\author{Daniel Lacker$^*$}
\thanks{$^*$Industrial Engineering \& Operations Research, Columbia University. Partially supported by the National Science Foundation under Award No. DMS 1502980; daniel.lacker@columbia.edu.}
\author{Thaleia Zariphopoulou$^\dagger$}
\thanks{$^\dagger$Dpts. of Mathematics and IROM, The University of Texas at Austin and the Mathematical Institute, University of Oxford; zariphop@math.utexas.edu.}
\date{}
\begin{document}

\begin{abstract}
We analyze a family of portfolio management problems under relative performance criteria, for fund managers having CARA or CRRA utilities and trading in a common investment horizon in log-normal markets. We construct explicit constant equilibrium strategies for both the finite population games and the corresponding mean field games, which we show are unique in the class of constant equilibria. In the CARA case, competition drives agents to invest more in the risky asset than they would otherwise, while in the CRRA case competitive agents may over- or under-invest, depending on their levels of risk tolerance.
\end{abstract}

\maketitle

\section{Introduction}

This paper is a contribution to both the theory of finite population and mean
field games and to optimal portfolio management under competition and
relative performance criteria. For the former, we construct explicit
solutions for both $n$-player and mean field games, providing a new family of
tractable solutions. For the latter, we formulate a new class of competition
and relative performance optimal investment problems for agents having
exponential (CARA) and power (CRRA) utilities, for both a finite number
and a continuum of agents.

The finite-population case consists of $n$ \emph{fund managers} (or \emph{agents}) trading between a
common riskless bond and an individual stock. The price of each stock is
modeled as a log-normal process driven by two independent Brownian motions. 
The first Brownian motion is the same for all prices, representing a common market noise,
while the second is idiosyncratic, specific to each individual stock.
Precisely, the $i^{\text{th}}$ fund specializes in stock $i$ whose price $(S_t^i)_{t \ge 0}$ is given by 
\begin{equation}
\frac{dS_{t}^{i}}{S_{t}^{i}}=\mu _{i}dt+\nu _{i}dW_{t}^{i}+\sigma _{i}dB_{t},
\label{stock}
\end{equation}
with constant market parameters $\mu _{i}>0$, $\sigma_{i}\ge 0$, and $\nu_{i}\ge 0$, with $\sigma_i+\nu_i>0$.
The (one-dimensional) standard Brownian motions $B,W^1,W^2,\ldots,W^n$ are independent. When $\sigma_i > 0$, the process $B$ induces a correlation between the stocks, and thus we call $B$ the \emph{common noise} and $W^{i}$ an \emph{idiosyncratic noise}.

Our setup covers the important special case in which all funds trade in the same stock; that is, $\mu_i=\mu$, $\nu_i=0$, and $\sigma_i=\sigma$ for all $i=1,\ldots,n$, for some $\mu,\sigma > 0$ independent of $i$.
In this setting, all stocks are identical and the model is to be interpreted as $n$ agents investing in a single common stock. These $n$ agents differ in their risk preferences but otherwise face the same market opportunities.
We choose to work with one-dimensional stocks $S^i$ for mere simplicity, but our analysis would adapt with purely notational changes to cover the case where each $S^i$ is a vector of stocks available to agent $i$. In this multidimensional setting, the ``single stock'' case would model the realistic situation of a large number of agents trading in the same vector of stocks.

All fund managers share a common time horizon, $T > 0$, and aim to
maximize their expected utility at $T$. The utility functions $U_1,\ldots,U_n$ are agent-specific functions of both terminal wealth, $X_{T}^{i}$, and a ``competition component,'' $\overline{X}_{T}$, which depends on
the terminal wealths of all agents. We study two representative cases,
related to the popular exponential and power utilities. 

For the exponential case, we assume that competition affects the wealth
additively and is modeled through the arithmetic average wealth of all agents,
\begin{equation}
U_{i}\left( X_{T}^{i},\overline{X}_{T}\right) =-e^{-\frac{1}{\delta _{i}}\left(X_{T}^{i}-\theta _{i}\overline{X}_{T}\right) }, \quad \text{ where } \quad  \overline{X}_{T}=\frac{1}{n}\sum_{k=1}^{n}X_{T}^{k}.  \label{expo-intro}
\end{equation}
The parameters $\delta_{i}>0$ and $\theta_{i}\in [0,1]$
represent the $i^{\text{th}}$ agent's absolute risk tolerance and absolute
competition weight, with small (resp. high) values of $\theta_{i}$
denoting low (resp. high) relative performance concern. This model is similar to and largely inspired by that of Espinosa and Touzi \cite{espinosa-touzi}.

For the power case, the competition affects the wealth multiplicatively and
is modeled through the geometric average wealth of all agents,
\begin{equation}
U_{i}\left( X_{T}^{i},\overline{X}_{T}\right) =\frac{1}{1-1/\delta_i}\left(
X_{T}^{i} \overline{X}_{T}^{-\theta_i}\right) ^{1-1/\delta _{i}}, \quad \text{ where } \quad  
\overline{X}_{T}=\left( \prod_{k=1}^{n}X_{T}^{k}\right) ^{1/n}.
\label{power-intro}
\end{equation}
Now, the parameters $\delta_{i}>0$ and $\theta_{i}\in [0,1]$ represent the $i^{\text{th}}$ agent's relative risk tolerance and relative competition weight. 
The geometric mean is used here largely for its tractability, but it also admits a natural interpretation: Quantities of the form $e^{r_1},\ldots,e^{r_n}$ have geometric mean $\exp(\frac{1}{n}\sum_{i=1}^nr_i)$, which indicates that the geometric mean of wealths is simply the exponential of the arithmetic mean of returns. In this sense, the agents are using \emph{returns} rather than \emph{absolute wealth} in measuring relative performance.

The aim is to identify Nash equilibria, namely, to find investment strategies $(\pi_{t}^{1,*},\ldots,\pi_{t}^{n,*})_{t \in [0,T]}$ such that $\pi_{t}^{i,*}$ is the optimal stock allocation exercised by the $i^{\text{th}}$ agent in response to the strategy choices of all other competitors, for $i=1,\ldots,n$. As is usually the case for exponential and power
risk preferences, $\pi_{t}^{i,*}$ is taken to be the absolute wealth
and the fraction of wealth invested in the $i^{\text{th}}$ stock, respectively.
The values $\pi_{t}^{i,*}$ may be negative, indicating that the agent shorts the stock.

Competition among fund managers is well documented in investment practice
for both mutual and hedge funds; see, for example, \cite{Agarwal03,basak2015competition,brown2001careers,chevalier1997risk,
ding2008investor,Gallaher06,kempf2008tournaments,li2006consequences,sirri1998costly}.
As it is argued in these works, competition can stem, for example, from career advancement motives, seeking higher money
inflows from their clients, preferential compensation contracts.
In most of these works, only the case of two managers has been considered
and in discrete time (or two period) models, with variations of criteria
involving risk neutrality, relative performance with respect to an absolute
benchmark or a critical threshold, or constraints on the managers' risk
aversion parameters. More recently, the authors in \cite{basak2015competition} proposed a continuous time log-normal model for two fund managers with power utilities.

Asset specialization is also well documented in the finance literature,
starting with Brennan \cite{brennan1975optimal,Coval-Moskowitz,Merton87}. Other representative works include \cite{boyle2012keynes,kacperczyk2005industry,liu2014solvency,mitton2007equilibrium,
van2009information,van2010information,uppal2003model}.
As it is argued in these works, a variety of factors prompt managers to
specialize in individual stocks or asset classes, such as
familiarity, learning cost reduction, ambiguity aversion, solvency
requirements, trading costs and constraints, liquidation risks,
and informational frictions.

For tractability, we search only for Nash equilibria in which the investment strategies are constants (i.e., chosen at time zero). This restriction is quite natural, given the log-normality of the
stock prices, the scaling properties of the CARA and CRRA utilities, and
the form of the associated competition components.
To construct such an equilibrium, we first solve each single agent's optimization problem given an arbitrary (but fixed) choice of competitors' constant strategies.

Incorporating the competition component $\overline{X}$ as an additional uncontrolled state process leads to a single Hamilton-Jacobi-Bellman (HJB) equation, which we show has a unique separable smooth solution. Together with the first
order conditions, this yields the candidate policies in a closed-form. We then
construct the equilibrium through a set of compatibility conditions,
which also provide criteria for existence and uniqueness. As an intermediate step, we use arguments from indifference valuation to obtain verification results for these smooth solutions. Specifically, we interpret each HJB equation as the one solved by the writer of an individual liability determined by the competition component.

The unique constant Nash equilibrium in each model turns out to be the sum of two components. The
first is the traditional Merton portfolio (see \cite{merton1971optimum}), which is optimal for the
individual expected utility problem without any relative performance
concerns. The second component depends on the individual competition
parameter and on other quantities involving the risk tolerance and
competition parameters of all agents as well as the market parameters of
all stocks. Naturally, this second component disappears when there is no competition.

In the exponential model, it turns out that competition always results in
higher investment in the risky asset. This is not, however, the case for the power model,
mainly because the sign of the second component might not be always fixed. This sign depends on the value of the relative risk tolerance, particularly whether it is larger or smaller than one; this is to be expected given well known properties of CRRA utilities and their optimal portolios (see, for example, the so called ``nirvana'' cases in \cite{kim-omberg}).

In the noteworthy special case of a single stock, common to all agents, the equilibrium strategies are simpler.
For both the exponential and the power cases, the Nash equilibrium is of Merton type but with a modified risk tolerance, which depends linearly on the individual risk tolerance and competition parameters, with the coefficients of this linear function depending on the population averages of these parameters.

The expressions for the equilibrium strategies simplify when the number of agents $n$ tends to infinity. The limiting expressions depend solely on the limit of the empirical distribution of the \emph{type vectors} $\zeta_i=(x^i_0,\delta_i,\theta_i,\mu_i,\nu_i,\sigma_i)$, for $i=1,\ldots,n$. We show that these limiting strategies can be derived intrinsically, as equilibria of suitable mean field games (MFGs). 
Intuitively, the finite set of agents becomes a continuum, with each individual trading between the common bond and her individual stock while also competing with the rest of the (infinite) population through a relative performance functional affecting her expected terminal utility.

Although explicit solutions are available for our $n$-agent games, the MFG framework is worth introducing in this context in part because it extends naturally to more complex models, such as those involving portfolio constraints or general utility functions.  In such models, we expect the MFG framework to be more tractable than the $n$-agent games.
For instance, \cite{bielagk2017equilibrium,espinosa-touzi,frei2011financial} study $n$-agent models similar to our CARA utility model but notably including equiibrium pricing and portfolio constraints, leading to difficult $n$-dimensional quadratic BSDE systems. A MFG formulation would likely be more tractable, at least reducing the dimensionality of the problem, though we do not tackle such an analysis in this paper.

The MFG is defined in terms of a representative agent who is assigned a \emph{random} type vector $\zeta=(\xi,\delta,\theta,\mu,\nu,\sigma)$ at time zero, determining her initial wealth $\xi$, preference parameters $(\delta,\theta)$, and market parameters $(\mu,\nu,\sigma)$. The randomness of the type vector encodes the distribution of the (continuum of) agents' types.

For the exponential case, the MFG problem is to find a pair $\left( \pi ^{* },\overline{X}\right) $ with the following properties. The
investment strategy $\pi ^{* }$ optimizes, in analogy to \eqref{expo-intro},  
\begin{equation}
\sup_{\pi }\E\left[ -e^{-\frac{1}{\delta }\left( X_{T}-\overline{X}\right) }\right] ,  \label{expo-MFG}
\end{equation}
where $X_{T}$ is the wealth of the representative agent and $\overline{X}$ the
average wealth of the continuum of agents. Furthermore, at this optimum,
the consistency condition $\overline{X}=\E[X_{T}^* | \F_T^B] $ must hold, where $(\F^B_t)_{t \in [0,T]}$ is the filtration generated by the common noise $B$, and $X^*$ is the optimal wealth determined by $\pi^*$.

For the power case, the aggregate wealth $\overline{X}$ must be consistent with its $n$-agent form in \eqref{power-intro}.
With this in mind, note that the geometric mean of a positive random variable $Y$ can be written as $\exp\E[\log Y]$, whether or not the distribution of $Y$ is discrete.
This points to the MFG problem of finding a pair $\left( \pi
^{\ast },\overline{X}\right) ,$ such that $\pi ^{* }$ optimizes
\begin{equation}
\sup_{\pi }\E\left[ \frac{1}{1-1/\delta }\left( X_{T}\overline{X}^{-\theta}\right)
^{1-1/\delta }\right] ,  \label{power-MFG}
\end{equation}
and, furthermore, the consistency condition $\overline{X}=\exp \E[\log X_{T}^{*} | \F_{T}^{B}] $ holds. While its use in mean field game theory appears to be new, this notion of geometric mean of a measure is essentially a special case of the well-studied concept of \emph{generalized mean} (see \cite[Chapter III]{hardy1952inequalities}).

As in the finite population CARA and CRRA cases, we focus on the tractable
class of equilibria in which the strategy $\pi^*$ is constant in time. Such strategies are still \emph{random}, 
measurable with respect to the (time-zero-measurable) random type vector.
We solve the MFG problems directly, constructing equilibria which agree with the limiting expressions from the $n$-agent games. In each model, the solution technique is analogous to the $n$-agent setting in that we treat the aggregate wealth term as an uncontrolled state process, find a smooth separable solution of a single HJB equation, and then enforce the consistency condition.
The resulting MFG strategies take similar but notably simpler forms than their $n$-agent counterparts and exhibit the same qualitative behavior and two-component structure discussed above.

Mean field games, first introduced in \cite{lasry-lions} and \cite{huang2006large}, have by now found numerous applications in economics and finance, notably including models of income inequality \cite{gabaix2016dynamics}, economic growth \cite{lucas2014knowledge}, limit order book formation \cite{gayduk2016endogenous}, systemic risk \cite{carmona-fouque-sun}, optimal execution \cite{jaimungal2015mean,cardaliaguet-lehalle}, and oligopoly market models \cite{chan2015bertrand}, to name but a few. The closest works to ours are the static model of \cite[Section 6]{gueant-lasry-lions}, which is a competitive variant of the Markowitz model, and the stochastic growth model of \cite{huang2016mean} which has some mathematical features in common with our power utility model.
That said, our work appears to be the first application of MFG theory to portfolio optimization.

Our results add two new examples of \emph{explicitly} solvable MFG models. Beyond the linear quadratic models of \cite{bensoussan2016linear,carmona-delarue-lachapelle,carmona-fouque-sun}, such examples are scarce, especially in the presence of common noise. The only other examples we know of are those in \cite[Sections 5 and 7]{gueant-lasry-lions} as well as the more recent \cite{sun2016systemic}, which is linear-quadratic aside from a square root diffusion term. 
In fact, our models permit an explicit solution of the so-called \emph{master equation} (cf. \cite{carmona2014master}).
Moreover, we wish to emphasize the manner in which we incorporate different \emph{types} of agents, by randomizing $\zeta=(\xi,\delta,\theta,\mu,\nu,\sigma)$ as described above. Several previous works on MFGs (e.g., \cite{huang2006large}) incorporated finitely many types  by tracking a vector of mean field interactions, one for each type, but our approach has the advantage of seamlessly incorporating (uncountably) infinitely many types. While randomizing types is a standard technique in static games with a continuum of agents, the idea has scarcely appeared in the (dynamic) MFG literature; to the best of our knowledge, it has appeared only in \cite{cardaliaguet-lehalle}.

The paper is organized as follows. In Section \ref{se:exp}, we present the exponential
model and study both the $n$-agent game and the MFG. In Section \ref{se:power},
we present the analogous results for power and logarithmic utilities. For
both classes, we provide qualitative comments on the Nash and mean field equilibria, in Sections \ref{se:discussion-exp} and \ref{se:discussion-power}, respectively.
We conclude in Section \ref{se:conclusions} with a discussion of open questions and future research directions.

\section{CARA risk preferences} \label{se:exp}

We consider \emph{fund managers} (henceforth, \emph{agents}) with exponential risk preferences with constant
individual (absolute) risk tolerances. Agents are also concerned with
how their performance is measured in relation to the performances of their competitors.
This is modeled as an additive penalty term depending on the average
wealth, and weighted by an investor specific comparison parameter.

We begin our analysis with the exponential class because of its additive
scaling properties, which allow for substantial tractability. Furthermore,
the exponential class provides a direct connection with indifference
valuation, used in solving the underlying HJB equation.

\subsection{The $n$-agent game} \label{se:np-exp}

We introduce a game of $n$ agents who trade in a common investment
horizon $[0,T]$. Each agent trades between an
``individual'' stock and a riskless bond. The latter is common to all agents,
serves as the numeraire and offers zero interest rate.

Stock prices are taken to be log-normal, as described in the introduction, each driven by two independent Brownian motions. Precisely, the price $(S_t^i)_{t \in [0,T]}$ of the stock $i$ traded by the $i^{\text{th}}$  agent solves \eqref{stock}, with given market parameters $\mu _{i}>0$, $\sigma_{i}\ge 0$, and $\nu_{i}\ge 0$, with $\sigma_i+\nu_i>0$.
The independent Brownian motions $B,W^1,\ldots,W^n$ are defined on a probability space $(\Omega ,\F,\PP)$, which we endow with the natural filtration $(\F_t)_{t\in [0,T]}$ generated by these $n+1$ Brownian motions. Recall that the \emph{single stock case} is when
\begin{equation*}
(\mu _{i},\sigma _{i})=(\mu,\sigma ), \ \text{ and }\nu
_i = 0, \ \text{ for } i=1,\ldots,n,
\end{equation*}
for some $\mu,\sigma > 0$ independent of $i$. Notably, the single stock case was studied in \cite{espinosa-touzi} and \cite{frei2011financial} in greater generality, incorporating portfolio constraints and more general stock price dynamics.

Each agent $i=1,\ldots,n$ trades using a self-financing strategy, $(\pi _{t}^{i})_{t \in [0,T]}$, which represents the
(discounted by the bond) amount invested in the $i^{\text{th}}$ stock. The $i^{\text{th}}$ agent's wealth $(X^i_t)_{t \in [0,T]}$ then solves 
\begin{equation}
dX_{t}^{i} = \pi_t^i(\mu _{i}dt+\nu _{i}dW_{t}^{i}+\sigma _{i}dB_{t}),
\label{def:X-i}
\end{equation}
with $X_0^i=x_0^i \in \R$.
A portfolio strategy is deemed admissible if it belongs to the set $\A$, which consists of self-financing $\FF$-progressively measurable real-valued processes $(\pi_t)_{t \in [0,T]}$ satisfying $\E\int_0^T|\pi_t|^2dt < \infty$.

The $i^{\text{th}}$ agent's utility is a function $U_i : \R^2 \rightarrow \R$ of \textit{both} her
individual wealth, $x$, and the average wealth of all agents, $m$. It is of the form
\[
U_i(x,m) := -\exp\left(-\frac{1}{\delta_i}\left(x - \theta_im\right)\right).
\]
We will refer to the constants $\delta_i > 0$ and $\theta_i \in [0,1]$ as the \textit{personal risk tolerance} and \textit{competition weight} parameters, respectively.\footnote{
Note that $(\delta_{i},\theta_{i})$, $i=1,\ldots,n,$ are
unitless, because all wealth processes are discounted by the riskless bond.}
If agents $i=1,\ldots,n$ choose admissible strategies $\pi^1,\ldots,\pi^n$, the payoff for agent $i$ is given by 
\begin{equation}
J_i(\pi^1,\ldots ,\pi^n) := \E\left[ -\exp \left( -\frac{1}{\delta_i}\left( X_{T}^{i}-\theta_i\overline{X}_{T}\right) \right) \right]
,\quad \text{with}\quad \overline{X}_{T}=\frac{1}{n}\sum_{k=1}^{n}X_{T}^{k},
\label{J-expo}
\end{equation}
where the dynamics of $(X^i_t)_{t \in [0,T]}$ are as in \eqref{def:X-i}.
Alternatively, we may express the above as 
\begin{equation*}
J_i(\pi ^{1},\ldots ,\pi ^{n}) = \E\left[ -\exp \left( -\frac{1}{\delta_i}\left( (1-\theta_i)X^i_T + \theta_i(X_T^i-\overline{X}_{T})\right) \right) \right] ,
\end{equation*}
which highlights how the competition weight $\theta _i$ determines the $i^{\text{th}}$ agent's risk preference for \emph{absolute wealth} versus \emph{relative wealth}.
An agent with large $\theta _{i}$ (close to one) is thus more concerned with relative wealth than absolute wealth.

These interdependent optimization problems are resolved competitively, applying the concept of  Nash equilibrium in the above investment setting.

\begin{definition} \label{def:Nash}
A vector $(\pi^{1,*},\ldots,\pi^{n,*})$ of admissible strategies is a (Nash) equilibrium if, for all $\pi^i\in \A$ and $i=1,\ldots,n$, 
\begin{equation}
J_i(\pi ^{1,*},\ldots,\pi ^{i,*},\ldots,\pi ^{n,*})\geq J_i(\pi ^{1,*},\ldots
,\pi ^{i-1,*},\pi^i,\pi^{i+1,*},\ldots ,\pi^{n,*}).
\label{Nash}
\end{equation}
A constant (Nash) equilibrium is one in which, for each $i$, $\pi^{i,*}$ is constant in time, i.e., $\pi^{i,*}_t=\pi^{i,*}_0$ for all $t \in [0,T]$.\footnote{Our notion of Nash equilibrium is more accurately known as an \emph{open-loop} Nash equilibrium. A popular alternative is \emph{closed-loop} Nash equilibrium, in which agents choose strategies in terms of feedback functions as opposed to stochastic processes. However, for \emph{constant} strategies, the open-loop and closed-loop concepts coincide. That is, a constant (open-loop) Nash equilibrium is also a closed-loop Nash equilibrium, and vice versa.}
\end{definition}

\begin{remark}
Because the filtration $\FF$ is Brownian, it holds for any admissible strategy $\pi \in \A$ that $\pi_0$ is nonrandom. With this in mind, a constant Nash equilibrium will be identified with a vector $(\pi^{1,*},\ldots,\pi^{n,*}) \in \R^n$. Note also that the definition of a constant Nash equilibrium still requires that the optimality condition \eqref{Nash} holds for \emph{every} choice of alternative strategy, not just constant ones.
\end{remark}

Our first main finding provides conditions for existence and uniqueness of a constant Nash equilibrium and also constructs it explicitly.

\begin{theorem} \label{th:exp-eq-np}
Assume that for all $i=1,\ldots,n$ we have $\delta_i > 0$, $\theta_i \in [0,1]$, $\mu_i > 0$, $\sigma_i \ge 0$, $\nu_i \ge 0$, and $\sigma_i + \nu_i > 0$.
Define the constants 
\begin{equation}
\varphi _{n} :=\frac{1}{n}\sum_{k=1}^{n}\delta_{k}\frac{\mu_{k}\sigma_{k}}{\sigma_{k}^{2}+\nu_{k}^{2}(1-\theta_{k}/n)} \quad \quad \text{ and } \quad \quad \psi_n := \frac{1}{n}\sum_{k=1}^{n}\theta_{k}\frac{\sigma_{k}^{2}}{\sigma_{k}^{2}+\nu_{k}^{2}(1-\theta_{k}/n)}.  \label{phi-psi-n-expo}
\end{equation}
There are two cases:
\begin{enumerate}[(i)]
\item If $\psi_n < 1$, there exists a unique constant equilibrium, given by
\begin{align}
\pi^{i,*}=\delta_i\frac{\mu_i}{\sigma_i^2 + \nu_i^2(1-\theta_i/n)}+\theta_i\frac{\sigma _i}{\sigma_i^2 + \nu_i^2(1-\theta_i/n)}\frac{\varphi_n}{1-\psi_n}.  \label{eq-expo}
\end{align}
Moreover, we have the identity
\begin{align*}
\frac{1}{n}\sum_{k=1}^n\sigma_k\pi^{k,*} = \frac{\varphi_n}{1-\psi_n}.
\end{align*}
\item If $\psi_n = 1$, there is no constant equilibrium.
\end{enumerate}
\end{theorem}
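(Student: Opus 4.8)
The plan is to reduce the search for a constant Nash equilibrium to a fixed-point problem in the single scalar quantity $\overline{\pi} := \frac{1}{n}\sum_{k=1}^n \sigma_k \pi^{k,*}$, which is the "common-noise exposure" of the average wealth. First I would fix an arbitrary vector of constant strategies $(\pi^k)_{k\ne i}$ for the competitors and analyze agent $i$'s best response. Because all strategies are constant and the dynamics \eqref{def:X-i} are linear, the terminal average wealth $\overline{X}_T$ is Gaussian, and agent $i$'s objective $J_i$ becomes, after conditioning and using the formula for the exponential moment of a Gaussian, an explicit function to be maximized over agent $i$'s (possibly non-constant) strategy $\pi^i \in \A$. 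The key observation is that $\overline{X}_T$ splits into a part driven by $B$ (which agent $i$ can partially hedge through $\sigma_i$) and parts driven by the $W^k$, $k\ne i$ (which are independent of agent $i$'s controllable randomness, hence contribute only a constant multiplicative factor after taking expectations). This lets me write the problem agent $i$ actually solves as a standard exponential-utility maximization with a modified (Gaussian) liability determined by the single number $\frac{\theta_i}{n}\sum_{k\ne i}\sigma_k\pi^k$, or equivalently by $\overline{\pi}$.

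Next I would invoke the verification/indifference-valuation machinery alluded to before the theorem statement: the HJB equation for agent $i$'s problem, with $\overline{X}$ adjoined as an uncontrolled state, has a unique smooth separable solution, and the first-order condition gives the best response in closed form. Carrying out the pointwise maximization of the (concave, quadratic-in-$\pi^i_t$) integrand yields that the best response is \emph{constant in time} and equals
\[
\pi^{i} = \frac{\delta_i \mu_i + \theta_i \sigma_i\,\overline{\pi}}{\sigma_i^2 + \nu_i^2(1-\theta_i/n)},
\]
where the denominator $\sigma_i^2+\nu_i^2(1-\theta_i/n)$ arises because agent $i$'s own contribution $\frac{1}{n}\pi^i$ to $\overline{X}_T$ must be treated consistently (so the term multiplying $\pi^i$ in the exponent involves $(1-\theta_i/n)\nu_i$ on the idiosyncratic noise). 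This is the main place to be careful: one must correctly separate "agent $i$'s own share of the average" from "the rest," and confirm that the resulting best-response map is well defined whenever $\sigma_i+\nu_i>0$ (so the denominator is strictly positive).

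Finally I would close the loop. Multiplying the best-response identity by $\sigma_i/n$ and summing over $i=1,\dots,n$ gives
\[
\overline{\pi} = \varphi_n + \psi_n\,\overline{\pi},
\]
with $\varphi_n,\psi_n$ exactly as in \eqref{phi-psi-n-expo}. When $\psi_n<1$ this has the unique solution $\overline{\pi}=\varphi_n/(1-\psi_n)$; substituting back into the best-response formula yields \eqref{eq-expo} and the stated identity, and uniqueness follows because any constant equilibrium must produce a value of $\overline{\pi}$ solving this linear equation. When $\psi_n=1$, the equation reads $0=\varphi_n$; since $\varphi_n>0$ under the standing assumptions (each summand has $\mu_k>0$ and $\sigma_k\ge 0$ with at least... in fact $\varphi_n>0$ provided some $\sigma_k>0$, which is forced when $\psi_n=1$ since $\psi_n=0$ otherwise), there is no solution, hence no constant equilibrium. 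The main obstacle is the verification step — showing that the candidate constant strategy is optimal against \emph{all} admissible (not merely constant) deviations — which is handled by the separable-HJB-solution and indifference-pricing argument referenced in the text, together with a standard martingale/concavity argument to rule out better non-constant responses.
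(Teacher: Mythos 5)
Your proposal is correct and follows essentially the same route as the paper: fix the competitors' constant strategies, adjoin the average wealth as an uncontrolled state in agent $i$'s HJB equation, extract a constant best response from the separable smooth solution, correct for agent $i$'s own $\tfrac{1}{n}\sigma_i\pi^i$ contribution to the average (which is exactly where the denominator $\sigma_i^2+\nu_i^2(1-\theta_i/n)$ comes from), and close with the scalar fixed-point equation $\overline{\pi}=\varphi_n+\psi_n\,\overline{\pi}$, including the same dichotomy at $\psi_n=1$ and the observation that $\psi_n=1$ forces some $\sigma_k>0$ and hence $\varphi_n>0$. One small caution: the preliminary remark that the $W^k$-driven parts of $\overline{X}_T$ ``factor out as a constant'' is only valid for strategies adapted to $(W^i,B)$, whereas admissible $\pi^i$ may depend on all the Brownian motions --- but since you ultimately rely on the HJB/verification argument with the average wealth as a state process, which covers all admissible deviations, this does not affect the proof.
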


An important corollary covers the special case of a single stock.

\begin{corollary}[Single stock] \label{co:exp-eq-np}
Assume that for all $i=1,\ldots,n$ we have $\mu_i=\mu > 0$, $\sigma_i=\sigma > 0$, and $\nu_i=0$. Define the constants 
\begin{equation*}
\overline{\delta} := \frac{1}{n}\sum_{k=1}^{n}\delta_k \quad \quad \text{ and } \quad \quad \overline{\theta} :=\frac{1}{n}\sum_{k=1}^{n}\theta_k.
\end{equation*}
There are two cases:
\begin{enumerate}[(i)]
\item If $\overline{\theta }<1$, there exists a unique constant equilibrium,
given by 
\begin{equation*}
\pi ^{i,*} = \left( \delta _{i} + \theta _{i}\frac{\overline{\delta }}{1-\overline{\theta }}\right)\frac{\mu }{\sigma ^{2}} .
\end{equation*}
\item If $\overline{\theta }=1$, there is no constant
equilibrium.
\end{enumerate}
\end{corollary}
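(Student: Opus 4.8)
The plan is to derive the corollary directly from Theorem~\ref{th:exp-eq-np} by specializing the market parameters, since the single-stock case is literally a special case of the $n$-agent game. First I would check that the hypotheses of the theorem are met in this regime: we have $\delta_i>0$, $\theta_i\in[0,1]$, $\mu_i=\mu>0$, $\sigma_i=\sigma>0$ and $\nu_i=0$, so in particular $\sigma_i+\nu_i=\sigma>0$ for every $i$, and the theorem applies verbatim.

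Next I would evaluate the constants $\varphi_n$ and $\psi_n$ from \eqref{phi-psi-n-expo} under the substitution $\mu_i=\mu$, $\sigma_i=\sigma$, $\nu_i=0$. Since $\nu_i=0$, each denominator $\sigma_i^2+\nu_i^2(1-\theta_i/n)$ collapses to $\sigma^2$, whence
\[
\varphi_n=\frac{1}{n}\sum_{k=1}^n\delta_k\frac{\mu\sigma}{\sigma^2}=\overline{\delta}\,\frac{\mu}{\sigma},\qquad\qquad \psi_n=\frac{1}{n}\sum_{k=1}^n\theta_k\frac{\sigma^2}{\sigma^2}=\overline{\theta}.
\]
In particular the dichotomy ``$\psi_n<1$ versus $\psi_n=1$'' of the theorem becomes exactly ``$\overline{\theta}<1$ versus $\overline{\theta}=1$,'' so part (ii) of the corollary is immediate from part (ii) of Theorem~\ref{th:exp-eq-np}.

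For part (i), I would plug these values into the equilibrium formula \eqref{eq-expo}. Each term simplifies: the Merton part becomes $\delta_i\mu/\sigma^2$, and the competition part becomes $\theta_i\frac{\sigma}{\sigma^2}\cdot\frac{\overline{\delta}\mu/\sigma}{1-\overline{\theta}}=\frac{\theta_i\overline{\delta}\mu}{\sigma^2(1-\overline{\theta})}$; adding them and factoring out $\mu/\sigma^2$ gives $\pi^{i,*}=\left(\delta_i+\theta_i\frac{\overline{\delta}}{1-\overline{\theta}}\right)\frac{\mu}{\sigma^2}$, as claimed. As a consistency check, the aggregate identity $\frac1n\sum_k\sigma_k\pi^{k,*}=\varphi_n/(1-\psi_n)$ of the theorem reduces to $\frac{\sigma}{n}\sum_k\pi^{k,*}=\frac{\overline{\delta}\mu}{\sigma(1-\overline{\theta})}$, which one can also read off directly by averaging the displayed formula for $\pi^{i,*}$.

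There is no genuine obstacle here: the corollary is purely a matter of substitution and algebraic simplification once Theorem~\ref{th:exp-eq-np} is available. The only point worth a moment's care is that setting $\nu_i=0$ removes the idiosyncratic noise but not the riskiness of the stock --- since $\sigma>0$ the stock remains genuinely risky --- so no degeneracy arises and the theorem is applicable without modification.
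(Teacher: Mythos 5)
Your proposal is correct and matches the paper's proof exactly: the paper also derives the corollary by applying Theorem \ref{th:exp-eq-np} with the simplifications $\varphi_n=\overline{\delta}\mu/\sigma$ and $\psi_n=\overline{\theta}$. Your version merely spells out the substitution and algebra in more detail.
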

\begin{proof}
Apply Theorem \ref{th:exp-eq-np}, taking note of the simplifications $\varphi_n = \overline{\delta}\mu/\sigma$ and $\psi_n = \overline{\theta}$.
\end{proof}

\begin{remark} \label{re:alternative-averaging-exp}
For a given agent $i$, it is arguably more natural to replace the average wealth $\overline{X}_T$ in the payoff functional $J_i$ defined in  \eqref{J-expo} with the average over all \emph{other} agents, \emph{not including herself}, i.e., $\overline{X}_T^{(-i)} = \frac{1}{n-1}\sum_{k \neq i}X^k_T$. 
Fortunately, there is a one-to-one mapping between the two formulations, so there is no need to solve both separately. Indeed, suppose the $i^{\text{th}}$ agent's payoff is
\[
\E\left[ -\exp \left( -\frac{1}{\delta'_i}\left( X_{T}^{i}-\theta'_i\overline{X}^{(-i)}_{T}\right) \right) \right],
\]
for some parameters $\theta'_i \in [0,1]$ and $\delta'_i>0$. By matching coefficients it is straightforward to show that
\[
\frac{1}{\delta'_i}\left( X_{T}^{i}-\theta'_i\overline{X}^{(-i)}_{T}\right)  = \frac{1}{\delta_i}\left( X_{T}^{i}-\theta_i\overline{X}_{T}\right),
\]
when $\theta_i \in [0,1]$ and $\delta_i > 0$ are defined by
\[
\delta_i = \frac{\delta'_i}{1+\frac{1}{n-1}\theta'_i} \quad\quad \text{ and } \quad\quad \theta_i = \frac{\theta'_i}{\frac{n-1}{n} + \frac{1}{n}\theta'_i}.
\]
We prefer our original formulation mainly because it results in simpler formulas for the equilibrium strategies in Theorem \ref{th:exp-eq-np} and Corollary \ref{co:exp-eq-np}. Moreover, for large $n$, this choice has negligible effect on the strategy $\pi^{i,*}$, as the differences $|\delta_i-\delta'_i|$ and $|\theta_i-\theta'_i|$ vanish.
\end{remark}

\begin{remark}
Even in the absence of competition, there are well known technical issues with exponential preferences in expected utility optimization, in that the wealth may become arbitrarily negative. This has been studied and partially addressed, essentially by carefully redefining the class of admissible controls. In particular, one can define admissible strategies such that wealth processes are supermartingales under all martingale measures with finite entropy. One must then solve the dual problem, but some technical issues may still remain; see \cite{delbaen2002exponential,schachermayer2001optimal}. On the other hand, more recent work has identified financially meaningful admissibility classes which often (but not always) contain the desired optimizer; see \cite{biagini2011admissible,biagini2012note}. 
\end{remark}

\begin{proof}[Proof of Theorem \ref{th:exp-eq-np}]
\bigskip Let $i$ be fixed.
Assume that all other agents, $k\neq i,$ follow constant
investment strategies, denoted by $\alpha_k\in \R$. Let $(X^k_t)_{t \in [0,T]}$ be the associated wealth processes, 
\begin{equation*}
X_{t}^{k}=x_{0}^{k} + \alpha_k\left( \mu _kt+\nu _kW_t^k+\sigma_kB_t\right) ,
\end{equation*}
and also define 
\begin{equation*}
Y_{t}:=\frac{1}{n}\sum_{k\neq i}X_{t}^{k}.
\end{equation*}
Then, the $i^{\text{th}}$ agent solves the optimization problem
\begin{equation}
\sup_{\pi^i \in \A}\E\left[ -\exp \left( -\frac{1}{\delta _{i}}\left(\left( 1-\frac{\theta _{i}}{n}\right) X_{T}^{i}-\theta _{i}Y_T\right)\right) \right], \label{u-expo-N}
\end{equation}
where $(X^i_t)_{t \in [0,T]}$ and $(Y_t)_{t \in [0,T]}$ have dynamics (cf. (\ref{def:X-i}))
\begin{align*}
dX^i_{t} &= \pi^i_{t}(\mu_{i}dt+\nu_{i}dW_{t}^{i}+\sigma_idB_{t}), \quad\quad\quad\quad X^i_0=x^i_0, \\
dY_{t} &= \widehat{\mu \alpha }dt + \widehat{\sigma \alpha }dB_{t}+\frac{1}{n}\sum_{k\neq i}\nu_{k}\alpha _{k}dW_{t}^{k}, \quad Y_0 = \frac{1}{n}\sum_{k\neq i}x^k_0,
\end{align*}
and where we have abbreviated
\begin{align*}
\widehat{\mu \alpha} := \frac{1}{n}\sum_{k\neq i}\mu_k\alpha_k \quad \quad \text{and} \quad \quad \widehat{\sigma\alpha} := \frac{1}{n}\sum_{k\neq i}\sigma_k\alpha_k.
\end{align*}
In the sequel, we will also use the abbreviation
\begin{align*}
\widehat{(\nu \alpha )^{2}} := \frac{1}{n}\sum_{k\neq i}\nu _{k}^{2}\alpha_{k}^{2}.
\end{align*}

The value of the supremum in \eqref{u-expo-N} is equal to $v(X^i_0,Y_0,0)$, where $v(x,y,t) $ solves the Hamilton-Jacobi-Bellman (HJB) equation 
\begin{equation}
v_{t}+\max_{\pi \in \R}\left( \frac{1}{2}(\sigma _{i}^{2}+\nu _{i}^{2})\pi^{2}v_{xx}+\pi \left( \mu _{i}v_{x}+\sigma _{i}\widehat{\sigma \alpha }v_{xy}\right) \right)   \label{HJB}
\end{equation}
\begin{equation*}
+\frac{1}{2}\left( \widehat{\sigma \alpha }^{2}+\frac{1}{n}\widehat{(\nu
\alpha )^{2}}\right) v_{yy}+\widehat{\mu \alpha }v_{y}=0,
\end{equation*}
for $\left( x,y,t\right) \in \R \times \R \times [0,T]$, with terminal condition
\begin{equation*}
v(x,y,T) = -e^{-\gamma_{i}\left(x - G(y)\right) }=-\exp \left( -\frac{1}{\delta _{i}}\left( \left( 1-\frac{\theta_{i}}{n}\right) x-\theta_{i}y\right) \right) .
\end{equation*}
Applying the first order conditions, equation (\ref{HJB}) reduces to 
\begin{equation*}
v_{t}-\frac{1}{2}\frac{(\mu _{i}v_{x}+\sigma _{i}\widehat{\sigma \alpha }v_{xy})^{2}}{(\sigma _{i}^{2}+\nu _{i}^{2})v_{xx}}+\frac{1}{2}\left( \widehat{\sigma \alpha }^{2}+\frac{1}{n}\widehat{(\nu \alpha )^{2}}\right)v_{yy}+\widehat{\mu \alpha }v_{y}=0.
\end{equation*}
Making the ansatz $v(x,y,t)=-f(t)\exp\left(-\frac{1}{\delta_i}((1-\frac{\theta_i}{n})x-\theta _{i}y)\right)$ yields, for $t \in [0,T]$,
\begin{equation*}
f'(t) - \rho f(t)=0.
\end{equation*}
with $f(T) = 1$ and  
\begin{equation}
\rho := \frac{(\mu_i+\theta_i\delta_i^{-1}\sigma_i\widehat{\sigma \alpha })^{2}}{2(\sigma_i^{2}+\nu_i^{2})} - \frac{\theta_i}{\delta_i}\widehat{\mu\alpha } - \frac{\theta_i^2}{2\delta_i^2} \left( \widehat{\sigma \alpha }^{2}+\frac{1}{n}\widehat{(\nu \alpha )^{2}}\right) . \label{def:rho-exp-np}
\end{equation}
Therefore, $f(t)=e^{-\rho (T-t)}$ and, in turn,
\begin{equation}
v\left( x,y,t\right) =-\exp\left(-\frac{1}{\delta _{i}}\left(\left(1-\frac{\theta_{i}}{n}\right)x-\theta _{i}y\right)-\rho (T-t)\right).  \label{v-exponential}
\end{equation}
The maximum in \eqref{HJB} and is achieved at
\begin{equation*}
\pi ^{i,*}(x,y,t) := -\frac{\mu _{i}v_{x}(x,y,t)+\sigma _{i}\widehat{\sigma
\alpha }v_{xy}(x,y,t)}{(\sigma _{i}^{2}+\nu _{i}^{2})v_{xx}(x,y,t)}.
\end{equation*}
Direct calculations yield that $\pi ^{i,*}$ is constant, 
\begin{equation*}
\pi ^{i,*}=\frac{\delta _{i}^{-1}(1-\theta _{i}/n)(\mu _{i}+\theta _{i}\delta _{i}^{-1}\sigma _{i}\widehat{\sigma \alpha })}{(\sigma _{i}^{2}+\nu_{i}^{2})\delta _{i}^{-2}(1-\theta _{i}/n)^{2}}=\frac{\delta _{i}\mu_{i}+\theta_{i}\sigma_{i}\widehat{\sigma \alpha }}{(\sigma _{i}^{2}+\nu_{i}^{2})(1-\theta_{i}/n)}.
\end{equation*}

We have thus constructed a smooth solution of the HJB equation and
calculated its associated feedback policy, which is constant and thus
admissible. Using the explicit form (14) and the admissibility of this
candidate control, we can establish a verification theorem following well
known arguments in stochastic optimization \cite{fleming2006controlled,pham2009continuous,touzi2012optimal}.

Alternatively, we note that the stochastic optimization problem \eqref{u-expo-N} can
be alternatively viewed as the one solved by an agent who is the ``writer''
of a liability $G(Y_T) := \frac{\theta_i}{1-\theta_i/n}Y_T$ , having
exponential preferences with risk aversion $\gamma_i := \frac{1}{\delta_i}( 1-\frac{\theta_i}{n})$. Similar problems have been studied in \cite{henderson2002valuation,musiela-zariphopoulou2004}, to which we refer the reader for more detailed arguments for the specific optimization problem at hand.

Therefore, for a candidate portfolio vector $(\alpha_{1},\ldots ,\alpha_{n})$ to
be a constant Nash equilibrium, we need $\pi^{i,*} = \alpha_{i}$, for $i=1,\ldots,n$.
Let 
\begin{equation*}
\overline{\sigma \alpha }:=\frac{1}{n}\sum_{k=1}^{n}\sigma_k\alpha_k=\widehat{\sigma \alpha }+\frac{1}{n}\sigma_i\alpha_i.
\end{equation*}
Then, we must have 
\begin{equation*}
\alpha_{i} = \pi^{i,*} =\frac{\delta _{i}\mu _{i}+\theta _{i}\sigma _{i}\overline{\sigma
\alpha }}{(\sigma _{i}^{2}+\nu _{i}^{2})(1-\theta _{i}/n)}-\frac{\theta
_{i}\sigma _{i}^{2}}{n(\sigma _{i}^{2}+\nu _{i}^{2})(1-\theta _{i}/n)}\alpha_{i},
\end{equation*}
which implies that 
\begin{align}
\alpha_{i}&=\frac{\delta _{i}\mu _{i}+\theta _{i}\sigma _{i}\overline{\sigma\alpha }}{(\sigma _{i}^{2}+\nu _{i}^{2})(1-\theta _{i}/n)}\left( 1+\frac{\theta_{i}\sigma _{i}^{2}}{n(\sigma _{i}^{2}+\nu _{i}^{2})(1-\theta _{i}/n)}\right) ^{-1} \nonumber \\
	&=\frac{\delta _{i}\mu _{i}+\theta _{i}\sigma _{i}\overline{\sigma \alpha }}{(\sigma _{i}^{2}+\nu _{i}^{2})(1-\theta _{i}/n)+\sigma _{i}^{2}\theta _{i}/n}
	=\frac{\delta _{i}\mu _{i}+\theta _{i}\sigma _{i}\overline{\sigma \alpha }}{\sigma _{i}^{2}+\nu _{i}^{2}(1-\theta _{i}/n)}.  \label{a-expo}
\end{align}
Multiplying both sides by $\sigma _{i}$ and then averaging over $i=1,\ldots
,n$, gives 
\begin{equation}
\overline{\sigma \alpha }=\varphi _{n}+\psi _{n}\overline{\sigma \alpha },
\label{eq-eqn-expo}
\end{equation}
with $\varphi _{n},\psi _{n}$ as in (\ref{phi-psi-n-expo}). For equality \eqref{a-expo} to hold, equality \eqref{eq-eqn-expo} must hold as well. There are three cases:
\begin{enumerate}[(i)]
\item If $\psi_n < 1$, then \eqref{eq-eqn-expo} yields $\overline{\sigma \alpha}=\varphi_{n}/(1-\psi _{n})$, and the equilibrium control is well defined and
given by \eqref{eq-expo}.
\item If $\psi_n=1$ and $\varphi_n > 0$, then equation \eqref{eq-eqn-expo} has no solution and thus no constant equilibria exist.
\item The remaining case is $\psi_{n}=1$ and $\varphi_{n}=0$, in which case equation \eqref{eq-eqn-expo} has infinitely many solutions. This, however, cannot occur. Indeed, because $\delta_i,\mu_i > 0$ for all $i$, we can only have $\varphi_n=0$ if $\sigma_i = 0$ for all $i$. Recalling that $\sigma_i + \nu_i > 0$ by assumption, this implies $\psi_n=0$, which is a contradiction.
\end{enumerate}
\end{proof}

\begin{remark}
One can also compute the equilibrium value function $v(x,y,t)$ of agent $i$, by explicitly computing $\rho$ defined in \eqref{def:rho-exp-np}, as the quantities $\widehat{\mu\alpha}$, $\widehat{\sigma\alpha}$, and $\widehat{\nu\alpha}$ are now known. However, we omit this tedious calculation.
\end{remark}

It remains an open problem to determine if there exist Nash equlibria that are not constant, e.g., equilibria of the feedback form $\pi^{i,*}=\pi^{i,*}(t,x_1,\ldots,x_n)$, depending on time and/or the agents' wealths. Actually, an adaptation our argument could show that the constant Nash equilibrium we derived is,  in fact, unique among the broader class of equilibria involving wealth-independent but potentially time-dependent strategies $\pi=\pi(t)$. The only delicate point is the solvability of the HJB equation \eqref{HJB}, which would force us to consider time-dependence with appropriate smoothness and growth conditions.

In addition, focusing on constant Nash equilibria can be justified as follows.
It is well known that, for log-normal models, exponential utilities lead to constant strategies. This holds not only for the plain investment problem but also, in the
absence of competition, for indifference-type investment problems.
The individual optimization problems we encountered in \eqref{u-expo-N} are directly analogous to these standard indifference-type problems.
Wealth-independence is well documented in this context, even in general semimartingale models. This is precisely the reason exponential utilities are so popular in the areas of asset price equilibrium and indifference valuation.

\subsection{The mean field game} \label{se:mfg-exp} 

In this section we study the limit as $n\rightarrow\infty$ of the $n$-agent game analyzed in the previous section.

We start with an informal argument, to build intuition and motivate the upcoming definition. For the $n$-agent game, we define for each agent $i=1,\ldots,n$ the \emph{type vector}
\begin{equation*}
\zeta_i := (x^i_0,\delta_{i},\theta_{i},\mu_{i},\nu_{i},\sigma_{i}).
\end{equation*}
These type vectors induce an empirical measure, called the \emph{type distribution}, which is the probability measure on the \emph{type space}
\begin{align}
\Z^e  := \R \times (0,\infty) \times [0,1] \times (0,\infty) \times [0,\infty) \times [0,\infty), \label{def:exp-type-space}
\end{align}
given by
\[
m_n(A) = \frac{1}{n}\sum_{i=1}^n1_A(\zeta_i), \ \text{ for Borel sets } A \subset \Z^e .
\]
We then see that for each agent $i$ the equilibrium strategy $\pi^{i,*}$ computed in Theorem \ref{th:exp-eq-np} depends only on her own type vector $\zeta_i$ and the distribution $m_n$ of all type vectors. Indeed, the constants $\varphi_n$ and $\psi_n$ (cf. \eqref{phi-psi-n-expo}) are obtained simply by integrating appropriate functions under $m_n$.

Assume now that as the number of agents becomes large, $n\rightarrow\infty$, the above empirical measure $m_n$ has a weak limit $m$, in the sense that $\int_{\Z^e } f\,dm_n \rightarrow \int_{\Z^e } f\,dm$ for every bounded continuous function $f$ on $\Z^e $. For example, this holds almost surely if the $\zeta_i$'s are i.i.d.\ samples from $m$. Let $\zeta=(\xi,\delta,\theta,\mu,\nu,\sigma)$ denote a random variable with this limiting distribution $m$. Then, we should expect the optimal strategy $\pi^{i,*}$ (cf. \eqref{eq-expo}) to converge to
\begin{align}
\lim_{n\rightarrow\infty}\pi^{i,*} =\delta_i\frac{\mu_i}{\sigma_i^2 + \nu_i^2}+\theta_i\frac{\sigma _i}{\sigma_i^2 + \nu_i^2}\frac{\varphi}{1-\psi}, \label{def:exp-limitingstrategy}
\end{align}
where
\begin{align*}
\varphi &:= \lim_{n\rightarrow\infty} \varphi_n = \E\left[\delta\frac{\mu\sigma}{\sigma^2+\nu^2}\right] \quad \quad \text{ and } \quad \quad \psi := \lim_{n\rightarrow\infty} \psi_n = \E\left[\theta\frac{\sigma^2}{\sigma^2+\nu^2}\right].
\end{align*}

The \emph{mean field game} (MFG) defined next allows us to derive the limiting strategy \eqref{def:exp-limitingstrategy} as the outcome of a self-contained equilibrium problem, which intuitively represents a game with a continuum of agents with type distribution $m$. Rather than directly modeling a continuum of agents, we follow the MFG paradigm of modeling a single representative agent, who we view as randomly selected from the population. The probability measure $m$ represents the distribution of type parameters among the continuum of agents; equivalently, the representative agent's type vector is a random variable with law $m$. Heuristically, each agent in the continuum trades in a single stock driven by two Brownian motions, one of which is unique to this agent and one of which is common to all agents. The equilibrium concept introduced in Definition \ref{def:MFE-exp} will formalize this intuition.

\subsubsection{Formulating the mean field game} \label{se:mfg-exp-formulation}
To formulate the MFG, we now assume that the probability space $(\Omega,\F,\PP)$ supports yet another independent (one-dimensional) Brownian motion, $W$, as well as a random variable
\[
\zeta = (\xi,\delta ,\theta ,\mu ,\nu ,\sigma),
\]
independent of $W$ and $B$, and with values in the space $\Z^e $ defined in \eqref{def:exp-type-space}. This random variable $\zeta$ is called the \emph{type vector}, and its distribution is called the \emph{type distribution}.

Let $\FF^{\mathrm{MF}} = (\F^{\mathrm{MF}}_t)_{t \in [0,T]}$ denote the smallest filtration satisfying the usual assumptions for which $\zeta $ is $\F^{\mathrm{MF}}_0$-measurable and both $W$ and $B$ are adapted. Let also $\FF^B=(\F^B_t)_{t \in [0,T]}$ denote the natural filtration generated by the Brownian motion $B$.\footnote{One might as well formulate the MFG on a different probability space from the $n$-agent game, but we prefer to avoid introducing additional notation.}

The \emph{representative agent's} wealth process solves
\begin{align}
dX_t = \pi_t(\mu dt + \nu dW_t + \sigma dB_t), \quad X_0 = \xi, \label{def:X-MFG}
\end{align}
where the portfolio strategy must belong to the admissible set $\A_{\mathrm{MF}}$ of self-financing $\FF^{\mathrm{MF}}$-progressively measurable real-valued processes $(\pi_t)_{t \in [0,T]}$ satisfying $\E\int_0^T|\pi_t|^2dt < \infty$.
The random variable $\xi$ is the initial wealth of the representative agent, whereas $(\mu,\nu,\sigma)$ are the market parameters. In the sequel, the parameters $\delta$ and $\theta$ will affect the risk preferences of the representative agent.

In this mean field setup, the \emph{single stock case} refers to the case where $(\mu,\nu,\sigma)$ is nonrandom, with $\nu=0$, $\mu > 0$, and $\sigma > 0$. In the context of the limiting argument above, this corresponds to the $n$-agent game in which $\mu_i=\mu$, $\nu_i=\nu=0$, and $\sigma_i=\sigma$ for all $i$. Note that each agent among the continuum may still have different preference parameters, captured by the fact that $\delta$ and $\theta$ are random.

\begin{remark}
There are two distinct sources of randomness in this model. One source comes from the Brownian motions $W$ and $B$ which drive the stock price processes over time. 
The second source is static and comes from the random variable $\zeta$, which describes the distribution of type vectors (which includes initial wealth, individual preference parameters, and market parameters) among a large (in fact, continuous) population. One
can then think of a continuum of agents, each of whom is assigned an i.i.d.\ type
vector at time zero, and the agents interact \emph{after} these assignments
are made.
\end{remark}

To see how to formulate the representative agent's optimization problem, let us first recall how the Nash equilibrium in the $n$-agent game was constructed. We first solved the optimization problem \eqref{HJB} faced by each single agent $i$, in which the strategies of the other agents $k \neq i$ were treated as fixed. However, instead of fixing the strategies of the other agents, we could have just fixed the mean terminal wealth $\frac{1}{n}\sum_{k\neq i}X^k_T$, as this is effectively the only source of interaction between the agents. This was precisely the idea behind the proof of Theorem \ref{th:exp-eq-np}, and this guides the upcoming formulation of the MFG.

To this end, suppose that $\overline{X}$ is a given random variable, representing the average wealth of the continuum of agents.
The representative agent has no influence on $\overline{X}$, as but one agent amid a continuum.
The objective of the representative agent is thus to maximize the expected payoff
\begin{align}
\sup_{\pi \in \A_{\mathrm{MF}}}\E\left[-\exp\left(-\frac{1}{\delta}\left(X_T - \theta\overline{X}\right)\right)\right], \label{def:exp-mfg-optimization}
\end{align}
where $(X_t)_{t \in [0,T]}$ is given by \eqref{def:X-MFG}. We are now ready to introduce the main definition of this section.

\begin{definition} \label{def:MFE-exp}
Let $\pi^* \in \A_{\mathrm{MF}}$ be an admissible strategy, and consider the $\F^B_T$-measurable random variable $\overline{X} := \E[X^*_T \,|\, \F^B_T]$, where $(X^*_t)_{t \in [0,T]}$ is the wealth process in \eqref{def:X-MFG} corresponding to the strategy $\pi^*$. We say that $\pi^*$ is a mean field equilibrium (MFE) if $\pi^*$ is optimal for the optimization problem \eqref{def:exp-mfg-optimization} corresponding to this choice of $\overline{X}$.

A constant MFE is an $\F^{\mathrm{MF}}_0$-measurable random variable $\pi^*$ such that, if $\pi_t := \pi^*$ for all $t \in [0,T]$, then $(\pi_t)_{t \in [0,T]}$ is a MFE. 
\end{definition}

Typically, a MFE is computed as a fixed point. One starts with a generic $\F^B_T$-measurable random variable $\overline{X}$, solves \eqref{def:exp-mfg-optimization} for an optimal $\pi^*$, and then computes $\E[X^*_T \,|\, \F^B_T]$. If we have a fixed point in the sense that the \emph{consistency condition}, $\E[X^*_T \,|\, \F^B_T] = \overline{X}$, holds, then $\pi^*$ is a MFE.
Intuitively, every agent in the continuum faces an independent noise $W$, an independent type vector $\zeta$, and the same common noise $B$. Therefore, conditionally on $B$, all agents face i.i.d.\ copies of the same optimization problem. Heuristically, the law of large numbers suggests that the average terminal wealth of the whole population should be $\E[X^*_T \,|\, \F^B_T]$. This consistency condition illustrates the distinct roles played by the two Brownian motions $W$ and $B$ faced by the representative agent.

Perhaps more intuitively clear is the case where $\sigma=0$ a.s., so there is no common noise term. In this case, the consistency condition could be replaced with $\overline{X} = \E[X^*_T]$, owing to the fact that each agent in the continuum faces an i.i.d.\ copy of the same optimization problem.

We refer the reader to \cite{carmona-delarue-lacker} for a detailed discussion of mean field games with common noise, and to \cite{cardaliaguet2015master,lacker2014general} for general results on limits of $n$-agent games. Alternatively, the so-called ``exact law of large numbers'' provides another way to formalize this idea of averaging over a continuum of (conditionally) independent agents \cite{sun2006exact}.

\subsubsection{An alternative formulation of the mean field game} \label{se:alternative-mfg}
It is worth emphasizing that the optimization problem \eqref{def:exp-mfg-optimization} treats the type vector $\zeta$ as a genuine source of randomness, in addition to the stochasticity coming from the Brownian motions. However, an alternative interpretation is given below which will also help in solving the MFG.

As our starting point, note that for a fixed $\F^B_T$-measurable random variable $\overline{X}$ we have
\begin{equation}
\sup_{\pi \in \A_{\mathrm{MF}}}\E\left[ -e^{-\frac{1}{\delta }\left( X_T-\theta \overline{X}\right) }\right] = \E\left[ u(\zeta)\right], \label{def:exp-mfg-optimization-rewrite}
\end{equation}
where $u(\cdot)$ is a value function defined for (deterministic) elements $\zeta_0=(x_0,\delta_0,\theta_0,\mu_0,\nu_0,\sigma_0)$ of the type space $\Z^e$ by
\begin{align}
u(\zeta_0) &:= \sup_{\pi}\E\left[-\exp \left( -\frac{1}{\delta_0}\left( \widetilde{X}_{T}^{\zeta_0,\pi }-\theta_0 \overline{X}\right) \right)\right], \label{def:reformulation}
\end{align}
with
\[
d\widetilde{X}^{\zeta_0,\pi}_t = \pi_t\left(\mu_0dt + \nu_0dW_t + \sigma_0dB_t\right), \  \widetilde{X}^{\zeta_0,\pi}_0 = x_0,
\]
and where the supremum is over square-integrable  processes which are progressively measurable with respect to the filtration generated by the Brownian motions $W$ and $B$ (noting that the random variable $\zeta$ is absent from this filtration).

For a deterministic type vector $\zeta_0 \in \Z^e$, the quantity $u(\zeta_0)$ can be interpreted as the value of the optimization problem \eqref{def:reformulation} \emph{faced by an agent of type $\zeta_0$}. On the other hand, the original optimization problem on the left-hand side of \eqref{def:exp-mfg-optimization-rewrite} gives the optimal expected value faced by an agent \emph{before the random assignment of types} at time $0$.

This new interpretation will be used somewhat implicitly to compute a MFE in the proof of Theorem \ref{th:exp-eq} below, in the following manner. We may write $u(\zeta_0)=v_{\zeta_0}(x_0,0)$ as the time-zero value of the solution of a HJB equation, $v_{\zeta_0}(x,t)$, for $(x,t) \in \R \times [0,T]$.\footnote{There is some redundancy in this notation, as $x_0$ is already part of the vector $\zeta_0$.} The optimal value on the left-hand side of \eqref{def:exp-mfg-optimization-rewrite} is then the expectation of these time-zero values, $\E[v_{\zeta}(\xi,0)]$, when the \emph{random} type vector $\zeta$ is used.

Similarly, the optimal strategy $\pi^{\zeta_0,*}$ in \eqref{def:reformulation} depends on the fixed value of $\zeta_0$. The optimal strategy for the left-hand side of \eqref{def:exp-mfg-optimization-rewrite} is then obtained by plugging in the random type vector, yielding $\pi^{\zeta,*} \in \A_{\mathrm{MF}}$. 
This justifies the interpretation of the strategy $\pi^{\zeta_0,*}$ as the strategy chosen by an agent with type vector $\zeta_0$.

\subsubsection{Solving the mean field game}
Next, we present the second main result, in which we construct a constant MFE and also provide conditions for its uniqueness. The result also confirms that the MFG formulation is indeed appropriate, as the MFE we obtain agrees with the limit of the $n$-agent equilibrium strategies in the sense of \eqref{def:exp-limitingstrategy}.

\begin{theorem} \label{th:exp-eq}
Assume that, a.s., $\delta > 0$, $\theta \in [0,1]$, $\mu > 0$, $\sigma \ge 0$, $\nu \ge 0$, and $\sigma+\nu > 0$. Define the constants 
\begin{equation*}
\varphi := \E\left[ \delta\frac{\mu \sigma }{\sigma^{2} + \nu^{2}}\right] \quad \quad \text{ and } \quad \quad  \psi := \E\left[ \theta \frac{\sigma ^{2}}{\sigma^{2} + \nu^{2}}\right],
\end{equation*}
where we assume that both expectations exist and are finite.

There are two cases:
\begin{enumerate}
\item If $\psi < 1$, there exists a unique constant MFE, given by
\begin{equation}
\pi^* = \delta \frac{\mu }{\sigma^{2} + \nu^{2}} + \theta \frac{\sigma}{\sigma^{2} + \nu^{2}}\frac{\varphi}{1-\psi}.  \label{mf-expo-optimal-control}
\end{equation}
Moreover, we have the identity
\begin{align*}
\E[\sigma\pi^*] = \frac{\varphi}{1-\psi}.
\end{align*}
\item If $\psi =1$, there is no constant MFE.
\end{enumerate}
\end{theorem}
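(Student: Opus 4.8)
The plan is to follow the route of Theorem~\ref{th:exp-eq-np}, replacing the arithmetic average over agents by the conditional expectation $\E[\,\cdot\mid\F^B_T]$ and handling the representative agent's problem through the deterministic-type reformulation \eqref{def:reformulation}. The first step is to pin down the form of a constant MFE. If $\pi^*\in\A_{\mathrm{MF}}$ is a constant MFE, then (the filtration being Brownian and $\F^{\mathrm{MF}}_0=\sigma(\zeta)$ up to null sets) $\pi^*$ is a deterministic function of $\zeta$, so $X^*_T=\xi+\pi^*(\mu T+\nu W_T+\sigma B_T)$; using that $W$ is independent of $(B,\zeta)$ with $\E[W_T]=0$, one gets
\[
\overline{X}=\E[X^*_T\mid\F^B_T]=a+bB_T,\qquad a:=\E[\xi]+T\,\E[\mu\pi^*],\quad b:=\E[\sigma\pi^*].
\]
Thus the aggregate wealth is necessarily affine in $B_T$, and it suffices to identify the two scalars $a,b$; conversely, for the existence part I would posit $\overline{X}$ of this form from the outset.

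Next I would solve the representative agent's problem \eqref{def:exp-mfg-optimization} for a fixed affine $\overline{X}=a+bB_T$. By \eqref{def:reformulation} this reduces to solving, for each deterministic type $\zeta_0=(x_0,\delta_0,\theta_0,\mu_0,\nu_0,\sigma_0)\in\Z^e$, an optimization problem of exactly the same shape as in Theorem~\ref{th:exp-eq-np}: introducing the uncontrolled, driftless auxiliary state $Y_t:=a+bB_t$ (so $Y_T=\overline{X}$), the value function $v_{\zeta_0}(x,y,t)$ solves a two-dimensional HJB equation with terminal datum $-\exp(-\frac{1}{\delta_0}(x-\theta_0 y))$, the first-order conditions and the separable ansatz $v_{\zeta_0}(x,y,t)=-f(t)\exp(-\frac{1}{\delta_0}(x-\theta_0 y))$ reduce it to a linear ODE for $f$ with $f(T)=1$, and the optimal feedback is the \emph{constant}
\[
\pi^{\zeta_0,*}=\frac{\delta_0\mu_0+\theta_0\sigma_0 b}{\sigma_0^2+\nu_0^2},
\]
which depends on $b$ but not on $a$. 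Verification goes exactly as in the $n$-agent case: either directly, from the explicit smooth solution and admissibility of this constant control, or via the indifference-valuation viewpoint in which the agent writes the liability $\theta_0\overline{X}$ under exponential preferences (\cite{henderson2002valuation,musiela-zariphopoulou2004}); strict concavity of the Hamiltonian in $\pi$ gives that $\pi^{\zeta_0,*}$ is the \emph{unique} optimizer. Plugging in the random type $\zeta$ then yields $\pi^{\zeta,*}\in\A_{\mathrm{MF}}$, the unique optimal strategy for \eqref{def:exp-mfg-optimization} with this $\overline{X}$.

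The final step imposes the consistency condition. Combining the first two steps: a constant MFE $\pi^*$ must, with $b=\E[\sigma\pi^*]$, coincide with $\pi^{\zeta,*}=(\delta\mu+\theta\sigma b)/(\sigma^2+\nu^2)$; multiplying by $\sigma$ and taking expectations gives the scalar fixed-point equation
\[
b=\E\!\left[\frac{\delta\mu\sigma}{\sigma^2+\nu^2}\right]+b\,\E\!\left[\frac{\theta\sigma^2}{\sigma^2+\nu^2}\right]=\varphi+b\psi ,
\]
after which $a$ is determined explicitly. Since $0\le\psi\le 1$ always ($\theta\le1$ and $\sigma^2/(\sigma^2+\nu^2)\le1$), the cases $\psi<1$ and $\psi=1$ are exhaustive. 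If $\psi<1$, then $b=\varphi/(1-\psi)$ is forced, $a$ follows, $\pi^{\zeta,*}$ becomes \eqref{mf-expo-optimal-control}; this strategy together with $\overline{X}=a+bB_T$ satisfies both optimality and consistency, so it is a constant MFE, and by the first two steps it is the only one, with $\E[\sigma\pi^*]=b=\varphi/(1-\psi)$ as claimed. If $\psi=1$, the equation reads $\varphi=0$; but $\varphi=\E[\delta\mu\sigma/(\sigma^2+\nu^2)]\ge0$ with equality only when $\sigma=0$ a.s., in which case $\sigma+\nu>0$ forces $\nu>0$ a.s.\ and hence $\psi=0\ne1$, a contradiction — so $\varphi>0$ and no constant MFE exists.

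The main obstacle, as in Theorem~\ref{th:exp-eq-np}, is the verification step for the exponential-utility subproblem: the standard admissibility subtleties for exponential preferences (wealth unbounded from below) must be addressed, which I would do through the indifference-valuation interpretation rather than by a direct argument. A secondary, mostly notational point is the careful passage between the random-type formulation \eqref{def:exp-mfg-optimization} and the deterministic-type family \eqref{def:reformulation} of Section~\ref{se:alternative-mfg}, namely that conditioning on $\F^{\mathrm{MF}}_0=\sigma(\zeta)$ decomposes the representative agent's problem into the problems indexed by $\zeta_0$ solved above.
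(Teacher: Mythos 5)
Your proposal is correct and follows essentially the same route as the paper: restrict to $\overline{X}$ affine in $B_T$ with slope $\E[\sigma\pi^*]$, solve the representative agent's exponential problem via a separable HJB ansatz to get the constant optimizer $(\delta\mu+\theta\sigma b)/(\sigma^2+\nu^2)$, and close the loop with the scalar fixed-point equation $b=\varphi+\psi b$, treating $\psi=1$ by the same $\varphi>0$ contradiction. The only (immaterial) difference is that you keep the aggregate wealth as a second state variable $Y_t=a+bB_t$, whereas the paper absorbs it into the one-dimensional state $Z_t=X_t-\theta\overline{X}_t$ before writing the HJB equation.
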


Next, we highlight the single stock case, noting that the form of the solution is essentially the same as in the $n$-agent game, presented in Corollary \ref{co:exp-eq-np}.

\begin{corollary}[Single stock] \label{co:exp-eq-single}
Suppose $(\mu,\nu,\sigma)$ are deterministic, with $\nu = 0$ and $\mu,\sigma > 0$. Define the constants
\begin{equation*}
\overline{\delta} := \E\left[ \delta \right] \quad \quad \text{ and } \quad \quad \bar{\theta} := \E[\theta ].
\end{equation*}
There are two cases:
\begin{enumerate}
\item If $\overline{\theta} < 1$, there exists a unique constant MFE, given by 
\begin{equation*}
\pi^* = \left( \delta +\theta \frac{\overline{\delta}}{1-\overline{\theta}}\right)\frac{\mu }{\sigma^2}.
\end{equation*}
\item If $\overline{\theta} = 1$, there is no constant MFE.
\end{enumerate}
\end{corollary}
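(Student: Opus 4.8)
The plan is to obtain Corollary~\ref{co:exp-eq-single} as an immediate specialization of Theorem~\ref{th:exp-eq}, in exactly the same way that Corollary~\ref{co:exp-eq-np} was deduced from Theorem~\ref{th:exp-eq-np}. First I would check that the hypotheses of Theorem~\ref{th:exp-eq} hold in the single stock regime: since $(\mu,\nu,\sigma)$ is deterministic with $\nu=0$ and $\mu,\sigma>0$, we have a.s.\ $\delta>0$, $\theta\in[0,1]$, $\mu>0$, $\sigma\ge 0$, $\nu\ge 0$, and $\sigma+\nu=\sigma>0$, so the standing assumptions are satisfied. The only integrability point to address is the finiteness of $\varphi$ and $\psi$: since $\sigma^2+\nu^2=\sigma^2$ is a positive constant and $\theta\in[0,1]$, $\psi=\E[\theta]=\overline{\theta}$ is automatically finite, while $\varphi=(\mu/\sigma)\E[\delta]=(\mu/\sigma)\overline{\delta}$ is finite precisely because the statement implicitly assumes $\overline{\delta}=\E[\delta]<\infty$ in defining that constant.

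Next I would record the simplifications of the two constants. Substituting $\nu=0$ and pulling the deterministic factors out of the expectations gives
\[
\varphi = \E\left[\delta\frac{\mu\sigma}{\sigma^{2}}\right] = \frac{\mu}{\sigma}\,\overline{\delta}, \qquad\qquad \psi = \E\left[\theta\frac{\sigma^{2}}{\sigma^{2}}\right] = \overline{\theta}.
\]
In particular the dichotomy $\psi<1$ versus $\psi=1$ in Theorem~\ref{th:exp-eq} becomes exactly $\overline{\theta}<1$ versus $\overline{\theta}=1$, which reproduces the case split in the statement.

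Finally, in the case $\overline{\theta}<1$ I would substitute these expressions into the formula \eqref{mf-expo-optimal-control} for the unique constant MFE:
\[
\pi^{*} = \delta\frac{\mu}{\sigma^{2}} + \theta\frac{\sigma}{\sigma^{2}}\cdot\frac{\varphi}{1-\psi} = \delta\frac{\mu}{\sigma^{2}} + \theta\frac{1}{\sigma}\cdot\frac{(\mu/\sigma)\,\overline{\delta}}{1-\overline{\theta}} = \left(\delta + \theta\frac{\overline{\delta}}{1-\overline{\theta}}\right)\frac{\mu}{\sigma^{2}},
\]
which is the claimed expression; uniqueness is inherited directly from Theorem~\ref{th:exp-eq}. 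In the case $\overline{\theta}=1$ we have $\psi=1$, so part~(2) of Theorem~\ref{th:exp-eq} gives nonexistence of a constant MFE. I do not expect any genuine obstacle: the whole argument is the bookkeeping of the two constants $\varphi$ and $\psi$, and the only mild subtlety worth spelling out is the implicit requirement $\E[\delta]<\infty$ needed for $\overline{\delta}$ (hence $\varphi$) to be well defined.
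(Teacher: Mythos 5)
Your proposal is correct and matches the paper's intended argument exactly: the corollary is obtained by specializing Theorem \ref{th:exp-eq} with the simplifications $\varphi = \overline{\delta}\mu/\sigma$ and $\psi = \overline{\theta}$, just as Corollary \ref{co:exp-eq-np} is deduced from Theorem \ref{th:exp-eq-np}. The remark on the implicit integrability assumption $\E[\delta]<\infty$ is a reasonable, if minor, point of added care.
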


\begin{proof}[Proof of Theorem \ref{th:exp-eq}]
The first step in constructing a constant MFE is to solve the
stochastic optimization problem in (\ref{def:exp-mfg-optimization}), for a given choice of $\overline{X}$.
First, observe that it suffices to restrict our attention to random variables $\overline{X}$ of
the form $\overline{X} = \E[X^\alpha_T|\F^B_T]$, where $X^\alpha$ solves \eqref{def:X-MFG} for some admissible strategy $\alpha \in \A_{\mathrm{MF}}$. However, because we are searching only for constant MFE, we may fix a constant strategy, i.e., an $\F^{\mathrm{MF}}_{0}$-measurable random variable $\alpha$ with $\E[\alpha^2] < \infty$.

It is convenient to define, for $t \in [0,T]$, 
\begin{equation*}
\overline{X}_{t} := \E[X_{t}^\alpha|\F^B_T].
\end{equation*}
Noting that $\overline{X}_T = \overline{X}$, the key idea is then to identify the dynamics of the process $(\overline{X}_t)_{t \in [0,T]}$ and incorporate it into the state process of the control
problem (\ref{def:exp-mfg-optimization}). Because $(\xi ,\mu ,\sigma,\nu ,\alpha )$, $W$, and $B$ are independent, we must have 
\begin{equation*}
\overline{X}_{t} = \overline{\xi}+\overline{\mu \alpha }t+\overline{\sigma \alpha }B_{t},
\end{equation*}
where we use the notation $\overline{M} = \E[M]$ for an integrable random variable $M$.

In turn, for $\pi \in \A_{\mathrm{MF}}$, we define, for $t \in [0,T]$,
\begin{equation*}
Z_{t}^{\pi }:=X_{t}^{\pi }-\theta \overline{X}_{t},
\end{equation*}
with $(X_{t}^{\pi })_{t \in [0,T]}$ solving \eqref{def:X-MFG}. Then, 
\begin{equation*}
dZ_{t}^{\pi }=(\mu \pi _{t}-\theta \overline{\mu \alpha })dt+\nu \pi
_{t}dW_{t}+(\sigma \pi _{t}-\theta \overline{\sigma \alpha })dB_{t},
\end{equation*}
with $Z^\pi_0 = \xi - \theta\overline{\xi}$.

We have thus absorbed $\overline{X}$ as part of the controlled state process. As a result, instead of solving the original control problem (\ref{def:exp-mfg-optimization}) we can equivalently solve the Merton-type problem,
\begin{equation}
\sup_{\pi \in \A_{\mathrm{MF}}}\E\left[ -\exp\left(-\frac{1}{\delta }Z_{T}^{\pi }\right)\right]. \label{pf:exp-mfg-optimization-reduced}
\end{equation}
As in the discussion in Section \ref{se:alternative-mfg}, the above supremum equals $\E[v(\xi - \theta\overline{\xi},0)]$, where $v(x,t)$ is the unique (smooth, strictly concave and strictly increasing in $x$) solution of the HJB equation 
\begin{equation}
v_{t}+\max_{\pi }\left( \frac{1}{2}\left( \nu ^{2}\pi ^{2}+(\sigma \pi-\theta \overline{\sigma \alpha })^{2}\right) v_{xx}+(\mu \pi -\theta \overline{\mu \alpha })v_{x}\right) =0,  \label{HJB-mf-expo}
\end{equation}
with terminal condition $v(x,T)=-e^{-x/\delta}$. 
We stress that this HJB equation is random, in the sense that it depends on
the $\F^{\mathrm{MF}}_0$-measurable type parameters $(\delta,\theta,\mu,\nu,\sigma)$. 

Equation \eqref{HJB-mf-expo} simplifies to
\begin{equation*}
v_{t}-\frac{1}{2}\frac{\left( \mu v_{x}-\theta \sigma \overline{\sigma\alpha }v_{xx}\right) ^{2}}{(\sigma^{2}+\nu^{2})v_{xx}}-\theta \overline{\mu \alpha }v_{x} + \frac12 (\theta\overline{\sigma \alpha})^2 v_{xx} =0.
\end{equation*}
Making the ansatz $v(x,t)=-e^{-x/\delta }f(t)$, the above reduces to 
\begin{equation*}
f^{\prime }(t) - \rho f(t)=0,
\end{equation*}
with $f(T)=1$, and with $\rho $ given by the $\F^{\mathrm{MF}}_0$-measurable random variable 
\begin{equation}
\rho := \frac{\left( \mu+ \frac{\theta}{\delta} \sigma \overline{\sigma \alpha }\right) ^{2}}{2(\sigma^{2}+\nu^{2})} - \frac{\theta}{\delta } \overline{\mu \alpha } - \frac12\left(\frac{\theta}{\delta}\overline{\sigma \alpha}\right)^2. \label{def:rho-exp-mfg}
\end{equation}
Thus, $f(t)=e^{-\rho (T-t)}$, and $v(x,t)=-e^{-x/\delta }f(t)$.
Furthermore, the optimal feedback control achieving the maximum in \eqref{HJB-mf-expo} is given by
\begin{align}
\pi^*(x,t) &= -\frac{\mu v_{x}\left( x,t\right) -\theta\sigma \overline{\sigma \alpha }v_{xx}(x,t) }{(\sigma^{2}+\nu^{2})v_{xx}(x,t)}   = \delta\frac{\mu}{\sigma^{2}+\nu^{2}}+\theta\frac{ \sigma }{\sigma^{2}+\nu^{2}}\overline{\sigma \alpha },  \label{pf:exp-optimal-control}
\end{align}
In fact, $\pi^*=\pi^*(x,t)$ is $\F^{\mathrm{MF}}_0$-measurable and does not depend on $(x,t)$. The optimality of $\pi^*$ for the problem \eqref{pf:exp-mfg-optimization-reduced} follows.

Recalling Definition \ref{def:MFE-exp}, we see that for the candidate control $\alpha$ to be a constant MFE, we need $\alpha = \pi^*$. In light of \eqref{pf:exp-optimal-control}, $\pi^*$ is a constant MFE if it solves the equation
\begin{align}
\pi^* &= \delta\frac{\mu}{\sigma^{2}+\nu^{2}}+\theta\frac{ \sigma }{\sigma^{2}+\nu^{2}}\overline{\sigma \pi^* }.  \label{pf:exp-optimal-control2}
\end{align}
Multiply both sides by $\sigma$ and average to find that $\overline{\sigma \pi^*}$ must satisfy
\begin{equation}
\overline{\sigma \pi^* }= \E\left[ \delta\frac{\mu \sigma  }{\sigma ^{2}+\nu^{2}}\right] + \E\left[ \theta\frac{ \sigma ^{2}}{\sigma ^{2}+\nu^{2}}\right] \overline{\sigma \pi^* }=\varphi +\psi \overline{\sigma \pi^* }.  \label{eqn-aux}
\end{equation}
We then have the following cases:
\begin{enumerate} [(i)]
\item If $\psi <1$, the above yields $\overline{\sigma \pi^* }=\varphi/(1-\psi )$, and using equation \eqref{pf:exp-optimal-control2} we prove part (1).
\item If $\psi =1$ but $\varphi \neq 0$, then equation (\ref{eqn-aux}) has no
solution, and, as a result, there can be no constant MFE.
\item The remaining case is $\psi=1$ and $\varphi =0$. However, this cannot happen. Indeed, if this were the case, then $\varphi = 0$ and the restrictions on the parameters would imply $\sigma=0 < \nu$ a.s., which would in turn yield $\psi=0$, a contradiction. This completes the proof of part (2).
\end{enumerate}
\end{proof}

\begin{remark}
Note that the proof above yields a tractable formula for the equilibrium value function of the representative agent. Since the controlled process $(Z^\pi_t)_{t \in [0,T]}$ starts from $Z^\pi_0=\xi - \theta\overline{\xi}$, the time-zero value to the representative agent (also called $u(\zeta)$ in Section \ref{se:alternative-mfg}) is given by
\begin{align}
v(\xi - \theta\overline{\xi},0)= -\exp\left(-\frac{1}{\delta}(\xi - \theta\overline{\xi}) - \rho T\right). \label{def:master-eq}
\end{align}
It is now straightforward to explicitly compute $\rho$ in \eqref{def:rho-exp-mfg}, using the values of $\overline{\mu\alpha}$ and $\overline{\sigma\alpha}$. Indeed,
\begin{align}
\rho &= \frac{1}{2(\sigma^2+\nu^2)}\left(\mu + \frac{\theta}{\delta}\frac{\varphi}{1-\psi}\sigma\right)^2 - \frac{\theta}{\delta}\left(\widetilde{\psi} + \frac{\widetilde{\varphi}\varphi}{1-\psi}\right), \label{def:rho-exp-mfg-result}
\end{align}
where the constants $\widetilde{\varphi}$ and $\widetilde{\psi}$ are defined by
\begin{align*}
\widetilde{\psi} = \E\left[\delta\frac{\mu^2}{\sigma^2+\nu^2}\right] \quad\quad \text{ and } \quad\quad \widetilde{\varphi} = \E\left[\theta\frac{\mu\sigma}{\sigma^2+\nu^2}\right].
\end{align*}
Notably, in the single stock case, this simplifies further to
\begin{align*}
\rho &= \left(1 + \left(\frac{\overline{\delta}\theta}{\delta(1-\overline{\theta})}\right)^2\right)\frac{\mu^2}{2\sigma^2}.
\end{align*}
\end{remark}

\begin{remark} \label{re:master-eq}
Equation \eqref{def:master-eq} essentially provides the solution of the so-called \emph{master equation}; see \cite{bensoussan2015master} or \cite{carmona2014master} for an introduction to the master equation in MFG theory. Indeed, the master equation is a PDE providing a function $U=U(x,m,t)$, where $(x,t) \in \R \times [0,T]$ and $m$ is a probability measure on $\R$. The value $U(x,m,t)$ is naturally interpreted as the value at time $t$ for a representative agent starting with wealth $X_t=x$, when the distribution of other agents' wealths is $m$. In our case, this value is nothing but
\[
U(x,m,t) = v(x-\theta \bar{m},t) = -\exp\left(-\frac{1}{\delta}(x - \theta \bar{m}) - \rho(T-t)\right),
\]
where $\bar{m}$ is the mean of the measure $m$.
We do not attempt to make this any more precise, as the concept of a master equation has not yet been settled for models with different types of agents.
\end{remark}

\subsection{Discussion of the equilibrium} \label{se:discussion-exp}
We focus most of the discussion on the mean field equilibria of Theorem \ref{th:exp-eq} and Corollary \ref{co:exp-eq-single}, as the $n$-agent equilibria of Theorem \ref{th:exp-eq-np} and Corollary \ref{co:exp-eq-np} have essentially the same structure.

Recall first that the MFE $\pi^*$ is $\F^{\mathrm{MF}}_0$-measurable or, equivalently, $\zeta$-measurable where $\zeta =(\xi,\delta ,\theta ,\mu ,\nu ,\sigma )$ is the type vector. The randomness of $\zeta$ captures the distribution of type vectors among the population, while a single realization of $\zeta$ can be interpreted as the type vector of a single representative agent. Hence, we interpret the investment strategy $\pi^{*}$ as the equilibrium strategy \emph{adopted by those agents with type vector $\zeta $}.

The equilibrium portfolio $\pi^*$ consists of two components. The first, $\delta\mu/(\sigma^2+\nu^2)$, is the classical Merton portfolio in the absence of relative performance concerns.
The second component
is always nonnegative, vanishing only in the absence of competition, i.e., when $\theta=0$. It increases linearly with the competition weight $\theta$, so we find that competition always increases the allocation in the risky asset. 

The representative agent's strategy $\pi^*$ is influenced by the other agents only through the  quantity $\varphi /(1-\psi) = \E[\sigma\pi^*]$. This quantity can be viewed as the volatility of aggregate wealth. Indeed, let $X^*$ denote the wealth process corresponding to $\pi^*$ (i.e., the solution of \eqref{def:X-MFG}). The average wealth of the population at time $t \in [0,T]$ is $Y_t: = \E[X^*_t | \F^B_T]$. A straightforward computation using the independence of $\zeta$, $W$, and $B$ yields
\[
Y_t = \E[\xi] + \E[\mu\pi^*]t + \E[\sigma\pi^*]B_t.
\]

Alternatively, we may interpret the ratio $\varphi/(1-\psi)$ in terms of the type distribution.
Define $R = \sigma^2/(\sigma^2+\nu^2)$, which is the fraction of the representative agent's stock's variance driven by the common noise $B$. Then $\varphi = \E[R\delta\mu/\sigma]$ is computed by multiplying each agent's Sharpe ratio by her risk tolerance parameter and the weight $R$, then averaging over all agents.
Similarly, $\psi= \E[R\theta]$ is the average competition parameter, weighted by $R$.
Several important factors will lead to an increase in $\varphi/(1-\psi)$, and thus an increase in the investment $\pi^*$ in the risky asset. Namely, $\pi^*$ increases as other agents become more risk tolerant (higher $\delta$ on average), as other agents become more competitive (higher $\theta$ on average), or as the quality of the other stocks increases, as measured by their Sharpe ratio (higher $\mu/\sigma$ on average).

Some of the effects of competition are more transparent in the single stock case of Corollary \ref{co:exp-eq-single}. The resulting MFE $\pi^*$ clearly resembles the Merton portfolio but with \emph{effective risk tolerance parameter} 
\[
\delta_{\mathrm{eff}}:=\delta +\theta \frac{\overline{\delta }}{1-\overline{\theta }}. 
\]
We always have $\delta_{\mathrm{eff}}>\delta $ if $\theta >0$, and the
difference $\delta_{\mathrm{eff}}-\delta $ increases with $\theta$, with $\overline{\delta}$, and with $\overline{\theta}$. That is, the representative agent invests more in the risky asset if she is more competitive, if other agents tend to be more risk tolerant, or if other agents tend to be more competitive. In the latter cases, when $\overline{\delta}$ and $\overline{\theta}$ increase, we can interpret the increase in $\pi^*$ as an effort, on the part of the representative agent, to ``keep up'' with a population more willing to take risk.
At the extreme ends, as both $\theta$ and $\bar{\theta}$ approach $1$, $\pi^*$ blows up very quickly; that is, a highly competitive agent in a population of highly competitive agents invests significantly in the risky asset. This is illustrated in Figure \ref{fig:CARA1} below.

\begin{figure}[h]
\centering
\includegraphics[scale=0.7]{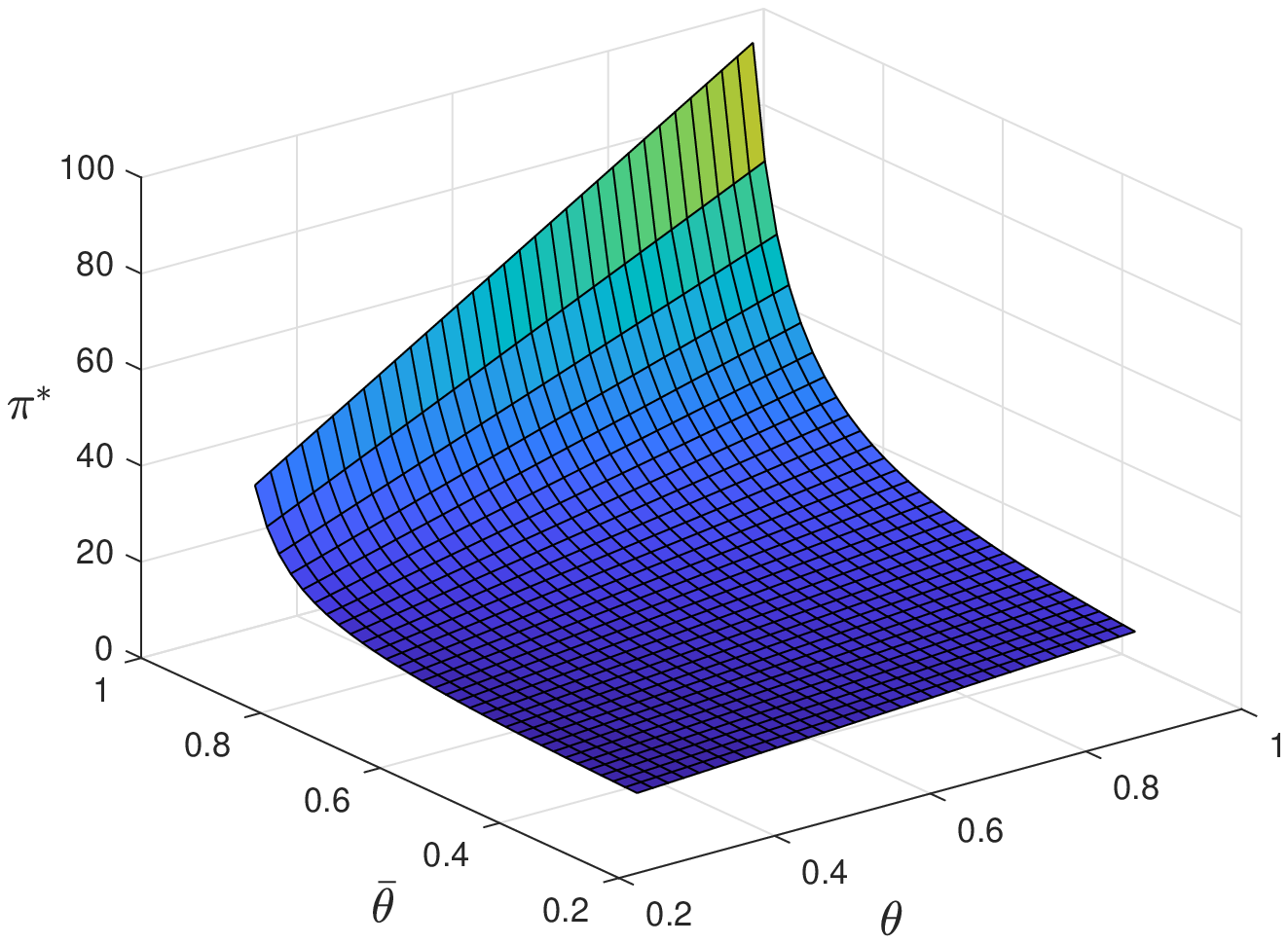}
\caption{Single stock case (Corollary \ref{co:exp-eq-single}): $\pi^*$ versus $\theta$ and $\bar{\theta}$, with $\delta =5$, $\bar{\delta}=6$, and $\mu=\sigma=1$.}
\label{fig:CARA1}
\end{figure}

A few other special cases are worth discussing. If $\sigma =0$ a.s., there is no common noise. In this case, $\varphi =\psi =0,$ and in turn the MFE is equal to the Merton portfolio. All agents act independently uncompetitively, not taking into account the performance of their competitors.

On the other hand, if $\nu =0$ a.s., there is no independent noise, and we have the simplifications $\psi=\E[\theta]$ and $\varphi=\E[\delta\mu/\sigma]$. If $\E[\theta] <1$, then 
\[
\pi^{*} = \delta \frac{\mu }{\sigma ^{2}}+\frac{\theta }{\sigma \left( 1-\E\left[ \theta \right] \right)} \E\left[ \delta \frac{\mu }{\sigma}\right] . 
\]
If $\nu =0$ a.s. and also $\E[\theta] =1$, then $\theta =1$ a.s. and $\psi=1$. In this case, every agent is concerned exclusively with relative and not absolute performance, and there is no equilibrium.

Another degenerate case is when all agents have the same type vector, i.e., when $\zeta $ is deterministic. Then, the MFE is common for all agents and (assuming $\theta <1$) reduces to
\[
\pi^{*} = \frac{\delta \mu }{\left( 1-\theta \right) \sigma ^{2}+\nu^{2}}.
\]

Lastly, we comment on the effect of population size on the equilibrium in the $n$-agent game given in Theorem \ref{th:exp-eq-np}. The only real difference compared to the mean field setting is the rescaling of $\nu_k^2$ by $(1-\theta_k/n)$ wherever it appears in Theorem \ref{th:exp-eq-np}, with no rescaling present in the single stock case of Corollary \ref{co:exp-eq-np}.
It is not yet clear how to properly interpret this rescaling, and it is worth noting that the change of variables discussed in Remark \ref{re:alternative-averaging-exp} does not significantly change the situation. Interpreting the average wealth $\frac{1}{n-1}\sum_{j\neq i}X^j_T$ as a liability term in an indifference valuation problem, as we have mentioned before, seems promising, but we do not pursue this any further here.

\section{CRRA risk preferences} \label{se:power}

In this section we focus on power and logarithmic (CRRA) utilities. Given
the homogeneity properties of the power risk preferences, we choose to
measure relative performance using a multiplicative and not additive factor.
Such cases were analyzed for a two-agent setting in \cite{basak2015competition} and
more recently in \cite{anthropelos-geng-zariphopoulou} under forward relative performance criteria.

\subsection{The $n$-agent game}

\label{se:np-power}
We consider an $n$-agent game analogous to that of Section \ref{se:np-exp}, but where each agent has a CRRA utility. We work on the same filtered probability space of Section \ref{se:np-exp}, and we assume that the $n$ stocks have the same dynamics as in \eqref{stock}.

The $n$ agents trade in a common investment horizon. As is common in power utility models, the strategy $\pi^{i}_t$ is taken to be the \emph{fraction} (as opposed to the amount) of wealth that agent $i$ invests in the stock $S^{i}$ at time $t$. Her discounted wealth is then given by
\begin{equation}
dX^{i}_{t} = \pi^{i}_{t}X^{i}_{t} \left( \mu_{i} dt + \nu_{i} dW^i_{t} + \sigma_{i} dB_{t}\right), \label{def:X-i-power}
\end{equation}
with initial endowment $X^{i}_{0}=x^i_0$. The class of admissible strategies is as before the set $\A$ of self-financing $\FF$-progressively measurable processes $(\pi_t)_{t \in [0,T]}$ satisfying $\E\int_0^T|\pi_t|^2dt < \infty$.

The $i^{\text{th}}$ agent's utility is a function $U_i : \R_+^2 \rightarrow \R$ of both her
individual wealth, $x$, and the geometric average wealth of all agents, $m$. Specifically,
\[
U_i(x,m) := U(x m^{-\theta_i};\delta_i),
\]
where $U(x;\delta)$ is defined for $x > 0$ and $\delta > 0$ by
\begin{align*}
U(x;\delta) := \begin{cases}
\left(1-\frac{1}{\delta}\right)^{-1}x^{1-\frac{1}{\delta}}, &\text{for } \delta \neq 1, \\
\log x &\text{for } \delta = 1.
\end{cases}
\end{align*}
The constant parameters $\delta_i > 0$ and $\theta_i \in [0,1]$ are the personal relative risk tolerance and competition weight parameters, respectively.\footnote{For CRRA utilities, it is more common to use the \emph{relative risk aversion} parameter $\gamma_{i}=1/\delta_{i}$, but our choice of parametrization ensures that
the \emph{relative risk tolerance} is precisely 
\begin{equation*}
\delta_i = -\frac{U_{x}(x;\delta_i)}{xU_{xx}(x;\delta_i)}.
\end{equation*}}
If agents $i=1,\ldots,n$ choose admissible strategies $\pi^1,\ldots,\pi^n$, the payoff for agent $i$ is given by 
\begin{equation}
J_i(\pi^1,\ldots,\pi^n) := \E\left[ U\left(X_{T}^{i}\overline{X}_T^{-\theta_i};\delta_i\right)\right] ,\quad \text{where} \quad \overline{X}_{T}=\left(\prod_{k=1}^{n}X_{T}^{k}\right) ^{1/n}. \label{def:Jpower}
\end{equation}
Notice that here, unlike in the exponential utility model, agents measure relative wealth using the \emph{geometric mean}, rather than the arithmetic mean. Working with the geometric mean instead of the arithmetic mean renders the problem tractable, as it allows us to exploit the homogeneity of the utility function.

The above expected utility may be rewritten more illustratively as
\[
J_i(\pi^1,\ldots,\pi^n) = \E\left[ U\left((X_T^i)^{1-\theta_i}(R^i_T)^{\theta_i};\delta_i\right)\right],
\]
where $R_{T}^{i}=X_{T}^{i}/\overline{X}_{T}$ is the \emph{relative return}
for agent $i$. This clarifies the role of the competition weight $\theta_i$ as governing the trade-off between absolute and relative wealth to agent $i$, as in the exponential utility model.
As before, an agent with a higher value of $\theta_i$ is more concerned with relative wealth than with absolute wealth.

The notion of (Nash) equilibrium is defined exactly as in Definition \ref{def:Nash}, but with the new objective function defined in \eqref{def:Jpower} above.
We find a unique \textit{constant} equilibrium in the following theorem, which we subsequently specialize to the single stock case.

\begin{theorem} \label{th:power-eq-np}
Assume that for all $i=1,\ldots,n$ we have $x^i_0 > 0$, $\delta_i > 0$, $\theta_i \in [0,1]$, $\mu_i > 0$, $\sigma_i\ge 0$, $\nu_i\ge 0$, and $\sigma_i+\nu_i > 0$.
Define the constants
\begin{equation}
\varphi_n := \frac{1}{n}\sum_{k=1}^{n}\delta_{k}\frac{\mu_{k}\sigma_{k}}{\sigma_{k}^{2}+\nu_{k}^{2}(1+(\delta_k - 1)\theta_{k}/n)}
\label{phi-n-power}
\end{equation}
and
\begin{equation}
\psi_n := \frac{1}{n}\sum_{k=1}^{n}\theta_{k}(\delta_{k}-1)\frac{\sigma_{k}^{2}}{\sigma_{k}^{2}+\nu_{k}^{2}(1+(\delta_k - 1)\theta_{k}/n)}.  \label{psi-n-power}
\end{equation}
There exists a unique constant equilibrium, given by 
\begin{equation}
\pi^{i,*} =\delta_i\frac{\mu _{i}}{\sigma _{i}^{2}+\nu _{i}^{2}(1+(\delta_i - 1)\theta _{i}/n)} - \theta _{i}(\delta_i-1) \frac{\sigma _{i}}{\sigma _{i}^{2}+\nu _{i}^{2}(1+(\delta_i - 1)\theta _{i}/n)}\,\frac{\varphi_n}{1+\psi_n}.  \label{eq-power}
\end{equation}
Moreover, we have the identity
\begin{align*}
\frac{1}{n}\sum_{k=1}^n\sigma_k\pi^{k,*} = \frac{\varphi_n}{1+\psi_n}.
\end{align*}
\end{theorem}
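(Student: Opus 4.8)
The plan is to follow the template of the proof of Theorem~\ref{th:exp-eq-np}, with the multiplicative geometric-mean coupling and the homogeneity of CRRA utility replacing the additive arithmetic-mean coupling and the translation invariance of CARA utility. First, fix $i$ and suppose each competitor $k\neq i$ trades a constant fraction $\alpha_k\in\R$; then $X^k$ is a geometric Brownian motion and $\log X^k_t$ is affine in $(t,W^k_t,B_t)$. Writing the geometric mean as $\overline{X}_T=(X^i_T)^{1/n}\exp(N_T)$ with $N_t:=\tfrac1n\sum_{k\neq i}\log X^k_t$, we obtain $X^i_T\overline{X}_T^{-\theta_i}=(X^i_T)^{1-\theta_i/n}\exp(-\theta_i N_T)$, where $N$ satisfies $dN_t=\widehat b\,dt+\widehat{\sigma\alpha}\,dB_t+\tfrac1n\sum_{k\neq i}\nu_k\alpha_k\,dW^k_t$ with $\widehat{\sigma\alpha}=\tfrac1n\sum_{k\neq i}\sigma_k\alpha_k$ and $\widehat b$ the corresponding drift. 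Agent $i$ thus faces a control problem with controlled state $X^i$ and uncontrolled state $N$, whose value function $v(x,y,t)$ on $(0,\infty)\times\R\times[0,T]$ solves an HJB equation of exactly the form of~\eqref{HJB}, the only surviving coupling term being $\pi x\sigma_i\widehat{\sigma\alpha}v_{xy}$ (since $W^i$ does not enter $N$), with terminal data $v(x,y,T)=U(x^{1-\theta_i/n}e^{-\theta_i y};\delta_i)$.

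For $\delta_i\neq1$ I would make the separable ansatz $v(x,y,t)=f(t)\,U(x^{1-\theta_i/n}e^{-\theta_i y};\delta_i)$ with $f(T)=1$: after imposing the first-order condition in $\pi$, homogeneity collapses the equation to a linear ODE $f'=\rho f$ for a constant $\rho$ depending only on $(\delta_i,\theta_i,\mu_i,\nu_i,\sigma_i,n)$ and on $\widehat b,\widehat{\sigma\alpha}$, so $f(t)=e^{\rho(T-t)}$ and the maximizing feedback is the constant $\pi^{i,*}=(\delta_i\mu_i+\theta_i(\delta_i-1)\sigma_i\widehat{\sigma\alpha})/(\sigma_i^2+\nu_i^2(1+(\delta_i-1)\theta_i/n))$, the last denominator emerging after the $(1-\theta_i/n)$ factors cancel. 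The logarithmic case $\delta_i=1$ is handled by the additive ansatz $v(x,y,t)=g(t)+(1-\theta_i/n)\log x-\theta_i y$, in which the competitive term drops out and $\pi^{i,*}=\mu_i/(\sigma_i^2+\nu_i^2)$. Because this feedback is constant, $X^i$ is again a geometric Brownian motion with finite moments of all orders, so a standard verification argument (as in~\cite{fleming2006controlled,pham2009continuous,touzi2012optimal}, using the same indifference-valuation reformulation as in the CARA proof, the ``liability'' now being a power of $e^{N_T}$) confirms that $v$ is the value function and $\pi^{i,*}$ is optimal in $\A$.

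The fixed-point step then proceeds as before: a constant equilibrium requires $\pi^{i,*}=\alpha_i$ for all $i$, and writing $\overline{\sigma\alpha}=\tfrac1n\sum_{k=1}^n\sigma_k\alpha_k=\widehat{\sigma\alpha}+\tfrac1n\sigma_i\alpha_i$ and solving for $\alpha_i$ produces, exactly as the $(1-\theta_i/n)$ appeared in the CARA case, the denominator $\sigma_i^2+\nu_i^2(1+(\delta_i-1)\theta_i/n)$ of~\eqref{eq-power}; multiplying by $\sigma_i$ and averaging over $i$ gives the scalar equation $(1+\psi_n)\overline{\sigma\alpha}=\varphi_n$ with $\varphi_n,\psi_n$ as in~\eqref{phi-n-power}--\eqref{psi-n-power}. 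Unlike in the CARA setting, there is no degenerate branch here: writing $1+\psi_n=\tfrac1n\sum_k(1+\theta_k(\delta_k-1)\sigma_k^2/(\sigma_k^2+\nu_k^2\lambda_k))$ with $\lambda_k:=1+(\delta_k-1)\theta_k/n>0$, and using that $\theta_k\in[0,1]$ and $\delta_k>0$ force $\theta_k(\delta_k-1)>-1$ while $\sigma_k^2/(\sigma_k^2+\nu_k^2\lambda_k)\in[0,1]$, each summand is strictly positive, so $1+\psi_n>0$ always. Hence $\overline{\sigma\alpha}=\varphi_n/(1+\psi_n)$ is the unique solution, and back-substitution yields the unique $\pi^{i,*}$ of~\eqref{eq-power} together with the stated averaging identity.

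I expect the main obstacle to be the verification step: unlike CARA, CRRA utility is unbounded and its sign depends on whether $\delta_i$ exceeds $1$, so one must carry out verification separately in the regimes $\delta_i<1$, $\delta_i>1$, and $\delta_i=1$ (in the first of which the value could a priori be $-\infty$) and rule out ``nirvana''-type infinite values --- which is automatic here only because the competitors' constant strategies keep the representative agent's opportunity set constant and the prices log-normal. Everything else, once the CARA computation is in hand, is routine algebra, the one genuinely new observation being that $1+\psi_n$ cannot vanish.
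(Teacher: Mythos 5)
Your proposal follows the paper's proof essentially step for step: freeze the competitors at constant fractions, use that the geometric mean of geometric Brownian motions is again a geometric Brownian motion (your choice of state $N_t=\log$ of the geometric mean rather than the mean itself is a cosmetic change of variables), solve the two-state HJB equation by the separable power ansatz (additive for $\delta_i=1$), and close the loop by averaging $\sigma_i\alpha_i$; your key observation that $1+\psi_n>0$ always, so that no degenerate branch arises as in the CARA case, is exactly the paper's. One slip worth correcting in your second paragraph: the maximizing feedback should read $\pi^{i,*}=\bigl(\delta_i\mu_i-\theta_i(\delta_i-1)\sigma_i\widehat{\sigma\alpha}\bigr)/\bigl((\sigma_i^2+\nu_i^2)(1+(\delta_i-1)\theta_i/n)\bigr)$ --- the coupling term carries a \emph{minus} sign (it is $+$ in the CARA model, but here the $y$-exponent of the value function is $-\theta_i(1-1/\delta_i)$, which reverses it), and the denominator $\sigma_i^2+\nu_i^2(1+(\delta_i-1)\theta_i/n)$ emerges only after the fixed-point substitution $\widehat{\sigma\alpha}=\overline{\sigma\alpha}-\sigma_i\alpha_i/n$, not in the feedback itself. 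Your third paragraph is consistent only with the correct minus sign (a $+$ would yield $(1-\psi_n)\overline{\sigma\alpha}=\varphi_n$ and reintroduce the degenerate branch you rightly rule out), so the argument and conclusion stand as written there.
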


\begin{corollary}[Single stock] \label{co:power-eq-np}
Assume that for all $i=1,\ldots,n$ we have $\mu_i=\mu > 0$, $\sigma_i=\sigma > 0$, and $\nu_i=0$. Define the constants 
\begin{equation*}
\overline{\delta } := \frac{1}{n}\sum_{k=1}^{n}\delta_{k} \quad \quad \text{ and } \quad \quad  \overline{\theta (\delta -1)} := \frac{1}{n}\sum_{k=1}^{n}\theta_{k}(\delta_{k}-1).
\end{equation*}
There exists unique constant equilibrium, given by 
\begin{equation*}
\pi ^{i,*} = \left( \delta _{i} - \frac{\theta_{i}(\delta _{i}-1)\overline{\delta }}{1+\overline{\theta (\delta -1)}}\right)\frac{\mu }{\sigma ^{2}}.
\end{equation*}
\end{corollary}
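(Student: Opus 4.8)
The plan is to follow the template of the proof of Theorem~\ref{th:exp-eq-np}, replacing the additive/arithmetic structure of the CARA model by the multiplicative/geometric one via logarithms. Fix $i$ and suppose each competitor $k\neq i$ uses a constant strategy $\alpha_k\in\R$, so that every $X^k$ is a geometric Brownian motion with $X^k_t = x^k_0\exp\!\big((\mu_k\alpha_k-\tfrac12\alpha_k^2(\sigma_k^2+\nu_k^2))t+\alpha_k\nu_k W^k_t+\alpha_k\sigma_k B_t\big)$. Since $\overline{X}_T=(X^i_T)^{1/n}\big(\prod_{k\neq i}X^k_T\big)^{1/n}$, agent $i$'s payoff becomes $\E\big[U\big((X^i_T)^{1-\theta_i/n}e^{-\theta_i Y_T};\delta_i\big)\big]$, where $Y_t:=\tfrac1n\sum_{k\neq i}\log X^k_t$ is an uncontrolled state process with linear dynamics $dY_t=\widehat{b\alpha}\,dt+\widehat{\sigma\alpha}\,dB_t+\tfrac1n\sum_{k\neq i}\nu_k\alpha_k\,dW^k_t$, with $\widehat{b\alpha}:=\tfrac1n\sum_{k\neq i}(\mu_k\alpha_k-\tfrac12\alpha_k^2(\sigma_k^2+\nu_k^2))$ and $\widehat{\sigma\alpha}:=\tfrac1n\sum_{k\neq i}\sigma_k\alpha_k$, exactly as in the exponential case.

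Second, I would treat $(X^i,Y)$ as a two-dimensional state and write the HJB equation for the value function $v(x,y,t)$, with terminal data $v(x,y,T)=U\big(x^{1-\theta_i/n}e^{-\theta_i y};\delta_i\big)$. With $p:=1-1/\delta_i$ and $q:=1-\theta_i/n$, the separable ansatz $v(x,y,t)=p^{-1}f(t)\,x^{pq}e^{-\theta_i p y}$ --- for $\delta_i\neq1$; in the logarithmic case $\delta_i=1$ one uses instead the additive ansatz $v(x,y,t)=q\log x-\theta_i y+g(t)$, and the competition term then drops out --- reduces the HJB equation, after the first-order condition, to a linear first-order ODE for $f$ (resp.\ $g$), solved explicitly. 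One checks that $pq<1$, so $v_{xx}<0$ and the first-order condition is a genuine maximum; the optimal feedback is then constant and, after the simplifications $1-pq=\delta_i^{-1}(1+(\delta_i-1)\theta_i/n)$ and $p\delta_i=\delta_i-1$, equals
\[
\pi^{i,*}=\frac{\delta_i\mu_i-\theta_i(\delta_i-1)\sigma_i\widehat{\sigma\alpha}}{(\sigma_i^2+\nu_i^2)\big(1+(\delta_i-1)\theta_i/n\big)}.
\]
Optimality of this admissible constant control follows from a verification argument along the lines of the exponential case.

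Third, to build the equilibrium I impose $\alpha_i=\pi^{i,*}$ for each $i$. Using $\widehat{\sigma\alpha}=\overline{\sigma\alpha}-\tfrac1n\sigma_i\alpha_i$ with $\overline{\sigma\alpha}:=\tfrac1n\sum_{k=1}^n\sigma_k\alpha_k$, solving the resulting linear relation for $\alpha_i$ gives \eqref{eq-power} but with $\overline{\sigma\alpha}$ in place of $\varphi_n/(1+\psi_n)$; in particular the denominator collapses to $\sigma_i^2+\nu_i^2(1+(\delta_i-1)\theta_i/n)$. Multiplying by $\sigma_i$ and averaging over $i$ yields the scalar fixed-point equation $\overline{\sigma\alpha}=\varphi_n-\psi_n\overline{\sigma\alpha}$, with $\varphi_n,\psi_n$ as in \eqref{phi-n-power}--\eqref{psi-n-power}.

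The one genuinely new point, compared with Theorem~\ref{th:exp-eq-np}, is that this scalar equation always has a unique solution, because $1+\psi_n>0$; this is the step I would be most careful with. Since $\theta_k\in[0,1]$ and $\delta_k>0$ force $(\delta_k-1)\theta_k>-1$, each factor $1+(\delta_k-1)\theta_k/n$ is strictly positive, so (using $\sigma_k+\nu_k>0$) every denominator $\sigma_k^2+\nu_k^2(1+(\delta_k-1)\theta_k/n)$ is strictly positive, $\varphi_n$ and $\psi_n$ are well defined, and each weight $r_k:=\sigma_k^2/\big(\sigma_k^2+\nu_k^2(1+(\delta_k-1)\theta_k/n)\big)$ lies in $[0,1]$. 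Hence each summand of $\psi_n$ equals $\theta_k(\delta_k-1)r_k$, which is $\ge\theta_k(\delta_k-1)>-1$ when $\theta_k(\delta_k-1)\le0$ (using $r_k\le1$) and nonnegative otherwise, so averaging gives $\psi_n>-1$. Thus $\overline{\sigma\alpha}=\varphi_n/(1+\psi_n)$ is forced, and substituting back yields the unique equilibrium \eqref{eq-power} together with the identity $\tfrac1n\sum_k\sigma_k\pi^{k,*}=\varphi_n/(1+\psi_n)$; uniqueness follows since any constant equilibrium must satisfy $\alpha_i=\pi^{i,*}$ for each $i$ (the single-agent problem having a unique optimizer) and hence the same scalar equation. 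Corollary~\ref{co:power-eq-np} is then immediate, since taking $\mu_i=\mu$, $\sigma_i=\sigma$, $\nu_i=0$ gives $\varphi_n=\overline{\delta}\,\mu/\sigma$ and $\psi_n=\overline{\theta(\delta-1)}$, so that \eqref{eq-power} reduces to the stated formula. The main obstacle I anticipate is not this algebra but the verification theorem for the CRRA value function --- in particular handling admissibility and the finiteness/sign of the value according to whether $\delta_i$ is larger or smaller than $1$ --- which, as in the exponential case, should nonetheless follow from standard stochastic-control arguments once the explicit smooth solution and its admissible feedback control are in hand.
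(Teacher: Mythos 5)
Your proposal is correct and follows essentially the same route as the paper: the corollary is obtained by specializing Theorem \ref{th:power-eq-np}, whose proof you reproduce faithfully (geometric mean of the competitors as an uncontrolled state, separable ansatz for the HJB equation, optimal feedback, and the scalar fixed-point equation $\overline{\sigma\alpha}=\varphi_n-\psi_n\overline{\sigma\alpha}$), the only cosmetic difference being that you take the logarithm of the geometric mean as the state variable. Your explicit check that $1+\psi_n>0$ supplies a detail the paper merely asserts.
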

\begin{proof}
Apply Theorem \ref{th:power-eq-np}, taking note of the simplifications $\varphi_n = \overline{\delta}\mu/\sigma$ and $\psi_n = \overline{\theta (\delta -1)}$.
\end{proof}

\begin{remark}
As in Remark \ref{re:alternative-averaging-exp} in the exponential utility model, one might modify our payoff structure so that agent $i$ excludes herself from the geometric mean $\overline{X}_T$. That is, one might replace the payoff functional $J_i$ defined in \eqref{def:Jpower} by
\[
\E\left[ U\left(X_{T}^{i}(\overline{X}^{(-i)}_T)^{-\theta'_i};\delta'_i\right)\right] ,\quad \text{where} \quad \overline{X}^{(-i)}_{T}=\left(\prod_{k\neq i}X_{T}^{k}\right) ^{1/(n-1)},
\]
for some parameters $\theta'_i \in [0,1]$ and $\delta'_i > 0$.
By modifying the preference parameters, we may view this payoff as a special case of ours.
Indeed, by matching coefficients it is straightforward to show that
\[
U\left(X_{T}^{i}(\overline{X}^{(-i)}_T)^{-\theta'_i};\delta'_i\right) = c_iU\left(X_{T}^{i}\overline{X}_T^{-\theta_i};\delta_i\right),
\]
for some constant $c_i > 0$ (which does not influence the optimal strategies), when $\theta_i \in [0,1]$ and $\delta_i > 0$ are defined by 
\[
\delta_i = \frac{\delta'_i}{\delta'_i - (\delta'_i-1)\left(1 + \frac{1}{n-1}\theta'_i\right)} \quad\quad \text{ and } \quad\quad \theta_i = \frac{\theta'_i}{\frac{n-1}{n} + \frac{1}{n}\theta'_i}.
\]
However, this is only valid if $(1-1/\delta'_i)\left(1 + \frac{1}{n-1}\theta'_i\right) < 1$, which ensures that $\delta_i > 0$. This certainly holds for sufficiently large $n$.
We favor our original parametrization because of the relative simplicity of the formulas in Theorem \ref{th:power-eq-np} and Corollary \ref{co:power-eq-np}, and because there is no difference in the $n \rightarrow \infty$ limit.
\end{remark}

\begin{proof}[Proof of Theorem \ref{th:power-eq-np}]
The proof is similar to that of Theorem \ref{th:exp-eq-np}, so we only highlight the main steps. Fix  an agent $i$ and constant strategies $\alpha_k \in \R$, for $k \neq i$. Define 
\begin{equation*}
Y_{t}:=\left( \prod_{k\neq i}X_{t}^{k}\right) ^{1/n},
\end{equation*}
where $X_{t}^{k}$ solves \eqref{def:X-i-power} with constant weights $\alpha_k$ and $X_{0}^{k}=x^{k}_0$.

Setting $\Sigma _{k}:=\sigma _{k}^{2}+\nu _{k}^{2}$, we deduce that 
\begin{equation*}
d\left( \log X_{t}^{k}\right) =\left( \mu _{k}\alpha _{k}-\frac{1}{2}\Sigma _{k}\alpha _{k}^{2}\right) dt+\nu _{k}\alpha _{k}dW_{t}^{k}+\sigma
_{k}\alpha _{k}dB_{t}.
\end{equation*}
In turn, 
\begin{equation*}
d\left( \log Y_{t}\right) =\frac{1}{n}\sum_{k\neq i}d\log X_{t}^{k}=\left( \widehat{\mu \alpha }-\frac{1}{2}\widehat{\Sigma \alpha ^{2}}\right) dt+\frac{1}{n}\sum_{k\neq i}\nu _{k}\alpha _{k}dW_{t}^{k}+\widehat{\sigma\alpha }dB_{t},
\end{equation*}
where we abbreviate
\begin{equation*}
\widehat{\mu \alpha }:=\frac{1}{n}\sum_{k\neq i}\mu _{k}\alpha _{k}, \quad\quad\quad\quad\quad \widehat{\sigma \alpha }:=\frac{1}{n}\sum_{k\neq i}\sigma
_{k}\alpha _{k},
\end{equation*}
\begin{equation*}
\widehat{\Sigma \alpha ^{2}}:=\frac{1}{n}\sum_{k\neq i}\Sigma \alpha_{k}^{2}  \quad\quad \text{and} \quad\quad  \widehat{(\nu \alpha )^{2}}:=\frac{1}{n}\sum_{k\neq i}\nu _{k}^{2}\alpha _{k}^{2}.
\end{equation*}
Thus, the process $Y_{t}$ solves
\begin{align}
\frac{dY_{t}}{Y_{t}}=\eta dt+\frac{1}{n}\sum_{k\neq i}\nu _{k}\alpha
_{k}dW_{t}^{k}+\widehat{\sigma \alpha }dB_{t}, \quad Y_0 = \left(\prod_{k\neq i}x^k_0\right)^{1/n}, \label{def:Ysde}
\end{align}
with
\begin{equation*}
\eta :=\widehat{\mu \alpha }-\frac{1}{2}\left( \widehat{\Sigma \alpha ^{2}}-\widehat{\sigma \alpha }^{2}-\frac{1}{n}\widehat{(\nu \alpha )^{2}}\right) .
\end{equation*}
The $i^{\text{th}}$ agent then solves the optimization problem 
\begin{equation}
\sup_{\pi^i \in \A}\E\left[U\left((X_{T}^i)^{1-\theta _{i}/n}Y_{T}^{-\theta_{i}};\delta_i\right)\right], \label{pf:power-nplayer-value}
\end{equation}
where
\begin{equation*}
dX^i_{t} = \pi^i_t X^i_t(\mu _{i}dt+\nu _{i}dW_{t}^{i}+\sigma _{i}dB_{t}), \ X^i_0 = x^i_0,
\end{equation*}
with $(Y_t)_{t \in [0,T]}$ solving \eqref{def:Ysde}.
We then obtain that the value \eqref{pf:power-nplayer-value} is equal to $v(X^i_0,Y_0,0)$, where $v(x,y,t)$ solves the HJB equation 
\begin{equation}
v_{t}+\max_{\pi \in \R}\left( \frac{1}{2}(\sigma _{i}^{2}+\nu _{i}^{2})\pi
^{2}x^{2}v_{xx}+\pi \left( \mu _{i}xv_{x}+\sigma _{i}\widehat{\sigma \alpha }
xyv_{xy}\right) \right) \label{def:np-power-HJB}
\end{equation}
\begin{equation*}
+\frac{1}{2}\left( \widehat{\sigma \alpha }^{2}+\frac{1}{n}\widehat{(\nu
\alpha )^{2}}\right) y^{2}v_{yy}+\eta yv_{y}=0,
\end{equation*}
for $(x,y,t) \in \R_+ \times \R_+ \times [0,T]$, with terminal condition
\[
v(x,y,T)=U(x^{1-\theta _{i}/n}y^{-\theta_{i}};\delta _{i}).
\]
Applying the first order conditions, the maximum in \eqref{def:np-power-HJB} is attained by
\begin{equation}
\pi ^{i,*}(x,y,t)=-\frac{\mu _{i}xv_{x}(x,y,t)+\sigma _{i}\widehat{\sigma \alpha }xyv_{xy}(x,y,t)}{(\sigma _{i}^{2}+\nu_{i}^{2})x^{2}v_{xx}(x,y,t)}.  \label{pi-power-n}
\end{equation}
In turn, equation \eqref{def:np-power-HJB} reduces to
\begin{equation}
v_{t}-\frac{1}{2}\frac{(\mu _{i}xv_{x}+\sigma _{i}\widehat{\sigma \alpha }xyv_{xy})^{2}}{(\sigma _{i}^{2}+\nu _{i}^{2})x^{2}v_{xx}}+\frac{1}{2}\left( \widehat{\sigma \alpha }^{2}+\frac{1}{n}\widehat{(\nu \alpha )^{2}}\right)y^{2}v_{yy}+\eta yv_{y}=0. \label{def:np-power-HJB-reduced}
\end{equation}
Working as in the proof of Theorem \ref{th:exp-eq-np}, we deduce that
the above HJB  equation has a unique smooth solution (in an appropriate class of time-separable and space-homogeneous solutions), and the optimal feedback control in \eqref{pi-power-n} reduces to
\begin{align}
\pi ^{i,*}=\frac{\delta _{i}\mu _{i} - \sigma _{i}\widehat{\sigma \alpha }\theta _{i}(\delta _{i}-1)}{(\sigma _{i}^{2}+\nu_{i}^{2})(\delta _{i}-(1-\theta _{i}/n)(\delta _{i}-1))}. \label{pi-power-n-solved}
\end{align}
We prove this in two cases:

\begin{enumerate}[(i)]
\item Suppose $\delta _{i}\neq 1.$
Making the ansatz 
\begin{equation*}
v(x,y,t)=U(x^{1-\theta _{i}/n}y^{-\theta _{i}};\delta _{i})f(t)=(1-1/\delta_{i})^{-1}(x^{(1-\theta _{i}/n)}y^{-\theta _{i}})^{1-1/\delta_{i}}f(t)
\end{equation*}
reduces equation \eqref{def:np-power-HJB-reduced} to $(1-1/\delta_{i})^{-1}f'(t)+\rho f(t)=0$, 
with $f(T) =1$, where
\begin{equation*}
\rho :=\, \, \frac{(\mu _{i}(1-\theta _{i}/n)-\sigma _{i}\widehat{\sigma\alpha }\theta _{i}(1-\theta _{i}/n)(1-1/\delta_{i}))^{2}}{2(\sigma_{i}^{2}+\nu _{i}^{2})(1-\theta _{i}/n)(1-(1-\theta _{i}/n)(1-1/\delta_{i}))}-\eta \theta_{i}
\end{equation*}
\begin{equation*}
\,+\frac{1}{2}\left( \widehat{\sigma \alpha }^{2}+\frac{1}{n}\widehat{(\nu\alpha )^{2}}\right) \theta _{i}(1+\theta _{i}(1-1/\delta_{i})).
\end{equation*}
We easily deduce that the solution of \eqref{def:np-power-HJB-reduced} is
\[
v(x,y,t) =(1-1/\delta_{i})^{-1}(x^{(1-\theta_{i}/n)}y^{-\theta _{i}})^{1-1/\delta_{i}}e^{\rho (1-1/\delta_{i})(T-t)},
\]
and that \eqref{pi-power-n} yields \eqref{pi-power-n-solved}.
\item Suppose $\delta _{i} = 1$. 
Making the ansatz 
\begin{equation*}
v(x,y,t) = U(x^{1-\theta _{i}/n}y^{-\theta _{i}};\delta _{i}) + f(t) = \left(1-\frac{\theta _{i}}{n}\right)\log x-\theta_{i}\log y+f(t)
\end{equation*}
reduces equation \eqref{def:np-power-HJB-reduced} to $f'(t) + \rho = 0$, with $f(T)=0$ and
\begin{equation*}
\rho :=\frac{\mu _{i}^{2}(1-\theta _{i}/n)}{2(\sigma _{i}^{2}+\nu _{i}^{2})}-\theta _{i}\eta +\frac{1}{2}\theta _{i}\left( \widehat{\sigma \alpha }^{2}+\frac{1}{n}\widehat{(\nu \alpha )^{2}}\right) .
\end{equation*}
In turn, the solution of \eqref{def:np-power-HJB-reduced} is given by
\begin{equation*}
v(x,y,t) = \left(1-\frac{\theta _{i}}{n}\right)\log x-\theta _{i}\log y+\rho(T-t),
\end{equation*}
and \eqref{pi-power-n} reduces to $\pi^{i,*}=\mu_i/(\sigma_i^2+\nu_i^2)$, which is consistent with \eqref{pi-power-n-solved} for $\delta_i=1$.
\end{enumerate}

We conclude the proof as follows. For $(\alpha_1,\ldots ,\alpha_n)$ to be a constant equilibrium, we must have $\pi^{i,* }=\alpha_i,$ for each $i=1,\ldots,n$. Using \eqref{pi-power-n-solved} and abbreviating 
\begin{equation*}
\overline{\sigma \alpha }:=\frac{1}{n}\sum_{k=1}^{n}\sigma _{k}\alpha _{k}=\widehat{\sigma \alpha }+\frac{1}{n}\sigma _{i}\alpha _{i},
\end{equation*}
we deduce that we must have 
\begin{equation*}
\alpha_i =\frac{\mu_{i}-\sigma_{i}\overline{\sigma \alpha }\theta_{i}(1-1/\delta_i)+\sigma_{i}^{2}\alpha_i(\theta_{i}/n)(1-1/\delta_i)}{(\sigma_{i}^{2}+\nu_{i}^{2})(1-(1-\theta_{i}/n)(1-1/\delta_i))}.
\end{equation*}
Solving for $\alpha_i$ yields,
\begin{align}
\alpha_i &=\frac{\mu _{i}-\sigma _{i}\overline{\sigma \alpha }\theta
_{i}(1-1/\delta_{i})}{(\sigma _{i}^{2}+\nu _{i}^{2})(1-(1-\theta
_{i}/n)(1-1/\delta_{i}))}\left( 1-\frac{\sigma _{i}^{2}(\theta
_{i}/n)(1-1/\delta_{i})}{(\sigma _{i}^{2}+\nu _{i}^{2})(1-(1-\theta
_{i}/n)(1-1/\delta_{i}))}\right) ^{-1} \nonumber \\
	&=\frac{\mu _{i}-\sigma _{i}\overline{\sigma \alpha }\theta _{i}(1-1/\delta_{i})}{(\sigma _{i}^{2}+\nu _{i}^{2})(1-(1-\theta _{i}/n)(1-1/\delta_{i}))-\sigma _{i}^{2}(\theta _{i}/n)(1-1/\delta_{i})} \nonumber \\
	&=\frac{\mu _{i}-\sigma _{i}\overline{\sigma \alpha }\theta _{i}(1-1/\delta_{i})}{\sigma _{i}^{2}/\delta_{i} + \nu _{i}^{2}(1-(1-\theta
_{i}/n)(1-1/\delta_{i}))} =\frac{\mu _{i}\delta _{i}-\sigma _{i}\overline{\sigma \alpha }\theta _{i}(\delta _{i}-1)}{\sigma _{i}^{2}+\nu_{i}^{2}(1-\theta _{i}/n+\delta _{i}\theta _{i}/n)}. \label{def:alpha-expression-power}
\end{align}
Multiplying both sides by $\sigma _{i}$ and averaging over $i=1,\ldots ,n$ give
\begin{equation}
\overline{\sigma \alpha }=\varphi _{n}-\psi _{n}\overline{\sigma \alpha }, \label{eq-eqn-power}
\end{equation}
where $\varphi _{n},\psi _{n}$ are as in (\ref{phi-n-power}) and \eqref{psi-n-power}. Because $1+\psi_n > 0$, equation \eqref{eq-eqn-power} holds if and only if $\overline{\sigma \alpha }=\varphi _{n}/(1+\psi_{n})$. We then deduce from \eqref{def:alpha-expression-power} that the equilibrium strategy $\alpha_i=\pi^{i,*}$ is given by \eqref{eq-power}.
\end{proof}

\begin{remark}
Note that equation \eqref{def:alpha-expression-power} above has a \emph{unique} solution for \emph{all} parameter values. In contrast, the analogous equation \eqref{eqn-aux} in the exponential case has no solutions for certain parameter values, which is why there were two cases in Theorem \ref{th:exp-eq-np}.
\end{remark}

It is worth highlighting that, for the CRRA case, we assume that relative performance concerns appear multiplicatively, and not additively. There are two reasons for this. First, as discussed in the introduction, this is natural in modeling preferences which depend on relative \emph{return} as opposed to relative \emph{wealth}; see also \cite{basak2015competition} for a discussion. The second reason is mathematical tractability. We have already seen that using the geometric mean leads to explicit solutions.

To formulate an analogous problem using an arithmetic mean, we may consider the following two possibilities. First, we may modify the optimization criterion to be of the form
\[
U\left(\frac{X^i_T}{\frac{1}{n}\sum_{i=1}^nX^i_T}; \, \delta_i\right).
\]
The challenge here is that the ratio appearing in the first argument cannot be expressed as the solution of a one-dimensional SDE. The proofs of both Theorems \ref{th:power-eq-np} and \ref{th:power-eq} exploited the fact that the geometric mean of geometric Brownian motions remains a geometric Brownian motion, whereas the arithmetic mean enjoys no such properties.

Alternatively, we may use an optimization criterion of the form $U(X^i_T - \frac{1}{n}\sum_{i=1}^nX^i_T;\delta_i)$, which, however, runs into more serious problems because $U(x,\delta_i)$ is well-defined and finite only for $x > 0$ (or $x \ge 0$ if $\delta_i > 1$). Hence, this criterion would enforce the hard constraint $X^i_T > \frac{1}{n}\sum_{i=1}^nX^i_T$ a.s., which raises the two natural questions of how this constraint propagates to previous times and whether this results in a meaningful class of solutions.

In short, using an arithmetic mean criterion would give rise to inter-dependent state and control constraints which will likely render the problem intractable, and, at worst, could lead to trivial or meaningless solutions.

\subsection{The mean field game} \label{se:mfg-power}

This section studies the limit as $n\rightarrow\infty$ of the $n$-player game analyzed in the previous section, analogously to the treatment of the exponential case in Section \ref{se:mfg-exp}.

We proceed with some informal arguments.
Recall that the \emph{type vector} of agent $i$ is
\begin{equation*}
\zeta_i := (x^i_0,\delta_{i},\theta_{i},\mu_{i},\nu_{i},\sigma_{i}).
\end{equation*}
As before, the type vectors induce an empirical measure, which is the probability measure on
\begin{align}
\Z^p  := (0,\infty) \times (0,\infty) \times [0,1] \times (0,\infty) \times [0,\infty) \times [0,\infty) \label{def:power-type-space}
\end{align}
given by
\[
m_n(A) = \frac{1}{n}\sum_{i=1}^n1_A(\zeta_i), \ \text{ for Borel sets } A \subset \Z^p .
\]
Similarly to the exponential case, for a given agent $i$, the equilibrium strategy $\pi^{i,*}$ computed in Theorem \ref{th:power-eq-np} depends only on her own type vector $\zeta_i$ and the distribution $m_n$ of type vectors, and this enables the passage to the limit.

Assume now that $m_n$ has a weak limit $m$, in the sense that $\int_{\Z^p } f\,dm_n \rightarrow \int_{\Z^p } f\,dm$ for every bounded continuous function $f$ on $\Z^p $. Let $\zeta=(\xi,\delta,\theta,\mu,\nu,\sigma)$ denote a random variable with distribution $m$. Then, the optimal strategy  $\pi^{i,*}$ (cf. \eqref{eq-power}) should converge to
\begin{align}
\lim_{n\rightarrow\infty}\pi^{i,*} =\delta_i\frac{\mu_i}{\sigma_i^2 + \nu_i^2}-\theta_i(\delta_i-1)\frac{\sigma _i}{\sigma_i^2 + \nu_i^2}\frac{\varphi}{1-\psi}, \label{def:power-limitingstrategy}
\end{align}
where
\begin{align*}
\varphi := \lim_{n\uparrow \infty }\varphi _{n}= \E\left[ \delta\frac{\mu \sigma  }{\sigma ^{2}+\nu ^{2}}\right] \quad\quad \text{and} \quad\quad \psi := \lim_{n\uparrow \infty }\psi _{n} =\E\left[ \theta (\delta -1)\frac{\sigma ^{2}}{\sigma ^{2}+\nu ^{2}}\right].
\end{align*}
As in the exponential case, we will demonstrate that this limiting strategy is indeed the equilibrium of a mean field game, which we formulate analogously to Section \ref{se:mfg-exp-formulation}.

Recall that $W$ and $B$ are independent Brownian motions and that the random variable $\zeta = (\xi,\delta ,\theta ,\mu ,\nu ,\sigma)$ is independent of $W$ and $B$. For the power case, the type vector $\zeta$ now takes values in the space $\Z^p $.
Furthermore, the filtration $\FF^{\mathrm{MF}}$ is the smallest one satisfying the usual assumptions for which $\zeta $ is $\F^{\mathrm{MF}}_0$-measurable and $W$ and $B$ are adapted. Finally, recall that $\FF^B=(\F^B_t)_{t \in [0,T]}$ denotes the natural filtration generated by the Brownian motion $B$.

The \emph{representative agent's} wealth process solves
\begin{align}
dX_t = \pi_tX_t(\mu dt + \nu dW_t + \sigma dB_t), \ X_0 = \xi, \label{def:X-MFG-power}
\end{align}
where the investment weight $\pi$ belongs to the admissible set $\A_{\mathrm{MF}}$ of $\FF^{\mathrm{MF}}$-progressively measurable real-valued processes satisfying $\E\int_0^T|\pi_t|^2dt < \infty$. Notice that, for all admissible $\pi$, the wealth process $(X_t)_{t \in [0,T]}$ is strictly positive, as $\xi > 0$ a.s.

We denote by $\overline{X}$ an $\F^{\mathrm{MF}}_0$-measurable random variable representing the geometric mean wealth among the continuum of agents. Then, the objective of the representative agent is to maximize the expected payoff
\begin{align}
\sup_{\pi \in \A_{\mathrm{MF}}}\E\left[U(X_T\overline{X}^{-\theta};\delta)\right], \label{def:power-mfg-optimization}
\end{align}
where $(X_t)_{t \in [0,T]}$ is given by \eqref{def:X-MFG-power}.

The definition of a mean field equilibrium is analogous to Definition \ref{def:MFE-exp}. However, one needs to extend the notion of geometric mean appropriately to a continuum of agents. The geometric mean of a measure $m$ on $(0,\infty)$ is most naturally defined as
\[
\exp\left(\int_{(0,\infty)}\log y\, \,dm(y)\right),
\]
when $\log y$ is $m$-integrable. Indeed, when $m$ is the empirical measure of $n$ points $(y_1,\ldots,y_n)$, this reduces to the usual definition $(y_1y_2\cdots y_n)^{1/n}$.

\begin{definition} \label{def:MFE-power}
Let $\pi^* \in \A_{\mathrm{MF}}$ be an admissible strategy, and consider the $\F^B_T$-measurable random variable $\overline{X} := \exp \E[\log X^*_T \,|\, \F^B_T]$, where $(X^*_t)_{t \in [0,T]}$ is the wealth process in \eqref{def:X-MFG-power} corresponding to the strategy $\pi^*$. We say that $\pi^*$ is a mean field equilibrium (MFE) if $\pi^*$ is optimal for the optimization problem \eqref{def:power-mfg-optimization} corresponding to this choice of $\overline{X}$.

A constant MFE is an $\F^{\mathrm{MF}}_0$-measurable random variable $\pi^*$ such that, if $\pi_t := \pi^*$ for all $t \in [0,T]$, then $(\pi_t)_{t \in [0,T]}$ is a MFE. 
\end{definition}

The following theorem characterizes the constant MFE, recovering the limiting expressions derived above from the $n$-agent equilibria.

\begin{theorem} \label{th:power-eq}
Assume that, a.s., $\delta > 0$, $\theta \in [0,1]$, $\mu > 0$, $\sigma \ge 0$, $\nu \ge 0$, and $\sigma+\nu > 0$. Define the constants 
\begin{equation*}
\varphi := \E\left[ \delta\frac{\mu \sigma  }{\sigma^{2} + \nu^{2}}\right] \quad \quad \text{ and } \quad \quad \psi := \E\left[\theta (\delta -1) \frac{\sigma ^{2}}{\sigma^{2} + \nu^{2}}\right] ,
\end{equation*}
where we assume that both expectations exist and are finite.

There exists a unique constant MFE, given by 
\begin{equation}
\pi^* = \delta\frac{\mu}{\sigma ^{2}+\nu ^{2}} - \theta (\delta -1)\frac{\sigma}{\sigma ^{2}+\nu ^{2}}\frac{ \varphi }{1+\psi }. \label{def:power-eq-control}
\end{equation}
Moreover, we have the identity
\begin{align*}
\E[\sigma\pi^*] = \frac{\varphi}{1+\psi}.
\end{align*}
\end{theorem}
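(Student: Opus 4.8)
The plan is to follow the same two-step pattern as the proofs of Theorems~\ref{th:power-eq-np} and~\ref{th:exp-eq}: first solve the representative agent's optimization problem \eqref{def:power-mfg-optimization} for an arbitrary constant candidate strategy, then close the loop with the consistency condition of Definition~\ref{def:MFE-power}.

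\emph{Step one: the single agent's problem.} Fix an $\F^{\mathrm{MF}}_0$-measurable $\alpha$ with $\E[\alpha^2]<\infty$ and let $X^\alpha$ solve \eqref{def:X-MFG-power}. Since
\[
\log X^\alpha_t = \log\xi + \Big(\mu\alpha - \tfrac12(\sigma^2+\nu^2)\alpha^2\Big)t + \nu\alpha W_t + \sigma\alpha B_t
\]
and $\zeta$, $W$, $B$ are independent, conditioning on $\F^B_T$ removes the $W$-term and freezes the type parameters, giving
\[
\log\overline{X}_t := \E[\log X^\alpha_t\,|\,\F^B_T] = \E[\log\xi] + \Big(\E[\mu\alpha] - \tfrac12\E[(\sigma^2+\nu^2)\alpha^2]\Big)t + \E[\sigma\alpha]\,B_t.
\]
Thus $\overline{X}_t$ is an explicit geometric Brownian motion driven only by the common noise $B$, with $\overline{X}_T=\overline{X}$. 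As in the $n$-agent proof I would then incorporate $\overline{X}_t$ as an uncontrolled state process alongside $X^\pi_t$ --- equivalently, pass to the one-dimensional state $Z^\pi_t := X^\pi_t\,\overline{X}_t^{-\theta}$, which by It\^o's formula satisfies $dZ^\pi_t/Z^\pi_t = \big((\mu-\theta\sigma\,\E[\sigma\alpha])\pi_t + c\big)dt + \nu\pi_t\,dW_t + (\sigma\pi_t - \theta\,\E[\sigma\alpha])\,dB_t$ for an explicit $\F^{\mathrm{MF}}_0$-measurable constant $c$ --- so that \eqref{def:power-mfg-optimization} becomes the Merton-type problem $\sup_{\pi}\E[U(Z^\pi_T;\delta)]$.

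\emph{Step two: the HJB equation and the fixed point.} The associated HJB equation is random, carrying the $\F^{\mathrm{MF}}_0$-measurable parameters $(\delta,\theta,\mu,\nu,\sigma,\E[\sigma\alpha])$; conditionally on $\F^{\mathrm{MF}}_0$ it is a deterministic CRRA Merton problem, so standard verification applies after integrating over $\F^{\mathrm{MF}}_0$. Exactly as in Theorem~\ref{th:power-eq-np}, the separable ansatz $v(z,t)=U(z;\delta)f(t)$ when $\delta\neq1$ and $v(z,t)=\log z+f(t)$ when $\delta=1$ reduces the PDE to a linear ODE for $f$ with $f(T)$ prescribed and an explicit exponential solution, while the first-order condition gives a feedback control that is constant in $(z,t)$ and $\F^{\mathrm{MF}}_0$-measurable, namely $\pi^* = \delta\mu/(\sigma^2+\nu^2) - \theta(\delta-1)\sigma\,\E[\sigma\alpha]/(\sigma^2+\nu^2)$. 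Imposing $\alpha=\pi^*$, multiplying by $\sigma$ and taking expectations yields $\E[\sigma\pi^*] = \varphi - \psi\,\E[\sigma\pi^*]$, as in \eqref{eq-eqn-power}. The decisive structural point, in contrast with the exponential case, is that $\psi>-1$ always: under the standing assumptions $\theta(\delta-1)\sigma^2/(\sigma^2+\nu^2) > -1$ pointwise, hence its expectation is $>-1$. Therefore $\E[\sigma\pi^*]=\varphi/(1+\psi)$ is the unique solution, with no degenerate sub-case, and substituting back gives \eqref{def:power-eq-control} together with the stated identity.

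\emph{Main obstacle.} The algebra here is routine; the real content is in the verification and admissibility bookkeeping. One must check that $\log X^\alpha_T$ is integrable so that $\overline{X}$ is well defined and that the candidate $\pi^*$ lies in $\A_{\mathrm{MF}}$, both of which require the finiteness hypotheses on $\varphi$ and $\psi$ to be combined with the explicit form of $\pi^*$. The verification step for the separable HJB solution needs some care precisely because the equation is random: the cleanest route is to condition on $\F^{\mathrm{MF}}_0$ and invoke the classical CRRA verification theorem (cf.\ \cite{fleming2006controlled,pham2009continuous,touzi2012optimal}, or the indifference-valuation viewpoint of \cite{musiela-zariphopoulou2004}) before taking expectations. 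I do not expect a ``nirvana''-type obstruction: in the $n\to\infty$ limit the awkward $\theta/n$ corrections of Theorem~\ref{th:power-eq-np} have disappeared, the relevant denominator is simply $\sigma^2+\nu^2>0$ a.s., and the ODE for $f$ is a plain exponential that neither vanishes nor blows up on $[0,T]$.
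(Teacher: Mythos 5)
Your proposal is correct and follows essentially the same route as the paper: reduce to a constant candidate $\alpha$, compute the geometric-mean process $\overline{X}_t=\exp\E[\log X^\alpha_t\,|\,\F^B_T]$ as a geometric Brownian motion driven by $B$ alone, solve the resulting (random, $\F^{\mathrm{MF}}_0$-conditionally classical) CRRA HJB problem with a separable ansatz to get $\pi^*=\Sigma^{-1}(\mu\delta-\theta(\delta-1)\sigma\E[\sigma\alpha])$, and close with the fixed-point equation $\E[\sigma\pi^*]=\varphi-\psi\E[\sigma\pi^*]$. The only cosmetic difference is that you collapse to the one-dimensional state $Z=X\overline{X}^{-\theta}$ (as the paper does in its exponential MFG proof) rather than keeping the two-dimensional state $(X,Y)$, and you make explicit the pointwise bound $\theta(\delta-1)\sigma^2/(\sigma^2+\nu^2)>-1$ guaranteeing $1+\psi>0$, which the paper states only in the $n$-agent version.
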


In the single stock case, the form of the solution is essentially the same as in the $n$-agent game, presented in Corollary \ref{co:power-eq-np}:

\begin{corollary}[Single stock] \label{co:power-eq-single}
Suppose $(\mu,\nu,\sigma)$ are deterministic, with $\nu = 0$ and $\mu,\sigma > 0$. Define the constants
\begin{equation*}
\overline{\delta } := \E[\delta] \quad \quad \text{ and } \quad \quad \overline{\theta(\delta-1)} := \E[\theta(\delta-1)].
\end{equation*}
There exists a unique constant MFE, given by 
\begin{equation*}
\pi^* =\left( \delta - \frac{\theta(\delta - 1) \overline{\delta}}{1+\overline{\theta(\delta-1)}}\right)\frac{\mu }{\sigma ^{2}}.
\end{equation*}
\end{corollary}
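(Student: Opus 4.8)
The plan is to derive this as a direct specialization of Theorem \ref{th:power-eq}, exactly as the proof of Corollary \ref{co:power-eq-np} derives from Theorem \ref{th:power-eq-np}. First I would substitute the deterministic values $\nu = 0$ and $\mu,\sigma > 0$ into the definitions of $\varphi$ and $\psi$ from Theorem \ref{th:power-eq}. Since $\nu = 0$ a.s., the fraction $\sigma^2/(\sigma^2 + \nu^2)$ collapses to $1$, so $\psi = \E[\theta(\delta-1)] = \overline{\theta(\delta-1)}$, and $\varphi = \E[\delta\mu\sigma/\sigma^2] = (\mu/\sigma)\E[\delta] = \overline{\delta}\mu/\sigma$. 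Note that $\psi = \overline{\theta(\delta-1)} > -1$ always holds, since $\theta \in [0,1]$ and $\delta > 0$ imply $\theta(\delta-1) \ge -\theta \ge -1$ pointwise, with equality only if $\theta = 1$ and $\delta = 0$ a.s., which is excluded; hence $1 + \psi > 0$ and the unique MFE of Theorem \ref{th:power-eq} is well-defined — there is no degenerate "no equilibrium" case here, unlike the CARA setting.

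Next I would plug these simplified constants into the MFE formula \eqref{def:power-eq-control}. With $\nu = 0$, the prefactors become $\delta\mu/(\sigma^2+\nu^2) = \delta\mu/\sigma^2$ and $\theta(\delta-1)\sigma/(\sigma^2+\nu^2) = \theta(\delta-1)/\sigma$, while $\varphi/(1+\psi) = (\overline{\delta}\mu/\sigma)/(1 + \overline{\theta(\delta-1)})$. Substituting,
\[
\pi^* = \delta\frac{\mu}{\sigma^2} - \frac{\theta(\delta-1)}{\sigma}\cdot\frac{\overline{\delta}\mu}{\sigma\bigl(1 + \overline{\theta(\delta-1)}\bigr)} = \left(\delta - \frac{\theta(\delta-1)\overline{\delta}}{1 + \overline{\theta(\delta-1)}}\right)\frac{\mu}{\sigma^2},
\]
which is exactly the claimed expression. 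The accompanying identity $\E[\sigma\pi^*] = \varphi/(1+\psi)$ of Theorem \ref{th:power-eq} specializes, since $\sigma$ is deterministic, to $\sigma\E[\pi^*] = \overline{\delta}\mu/(\sigma(1+\overline{\theta(\delta-1)}))$, which is consistent with averaging the displayed formula for $\pi^*$.

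This argument involves no real obstacle; the only point requiring a moment's care is confirming $1 + \psi > 0$ so that Theorem \ref{th:power-eq} applies without the caveat about nonexistence, and verifying that the single-stock assumptions $\nu = 0$, $\sigma > 0$ are compatible with the hypothesis $\sigma + \nu > 0$ of that theorem (they are). The proof is therefore a one-line invocation: apply Theorem \ref{th:power-eq} with the simplifications $\varphi = \overline{\delta}\mu/\sigma$ and $\psi = \overline{\theta(\delta-1)}$.
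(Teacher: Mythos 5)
Your proposal is correct and is exactly the paper's (implicit) argument: the paper proves this corollary the same way it proves Corollary \ref{co:power-eq-np}, namely by invoking the general theorem (here Theorem \ref{th:power-eq}) with the simplifications $\varphi = \overline{\delta}\mu/\sigma$ and $\psi = \overline{\theta(\delta-1)}$ that follow from $\nu=0$. Your added check that $1+\psi>0$ (so the formula is well defined and there is no nonexistence case, unlike the CARA setting) is a sensible extra precaution consistent with the remark the paper makes in the proof of Theorem \ref{th:power-eq-np}.
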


\begin{proof}[Proof of Theorem \ref{th:power-eq}]
\bigskip As in the exponential case, we first reduce the optimal control
problem \eqref{def:power-mfg-optimization} to a low-dimensional Markovian one. To this end, it suffices to restrict our attention to random variables $\overline{X}$ of the form 
\begin{equation*}
\overline{X}=\exp \E[\log X_{T}^{\alpha }| \F^B_T],
\end{equation*}
where $X^\alpha$ is the wealth process of \eqref{def:X-MFG-power} with an admissible constant strategy  $\alpha$. That is, $\alpha$ is an $\F^{\mathrm{MF}}_0$-measurable random variable satisfying $\E[\alpha^2] < \infty$.

Define
\begin{equation*}
Y_t := \exp \E[\log X_t^{\alpha } | \F^B_T].
\end{equation*}
Note that $Y_t = \exp \E[\log X_t^{\alpha } | \F^B_t]$, for $t \in [0,T)$, because $(B_s-B_t)_{s \in [t,T]}$ and $X^\alpha_t$ are independent. In analogy to the exponential case, we identify the dynamics of $Y$ and, in
turn, treat it, as an additional (uncontrolled) state process.

To this end, first use It\^{o}'s formula to get 
\begin{equation*}
d\left( \log X_{t}^{\alpha }\right) =\left( \mu \alpha -\frac{1}{2}(\sigma^{2}+\nu^{2})\alpha ^{2}\right) dt+\nu \alpha dW_{t}+\sigma \alpha
dB_{t}.
\end{equation*}
Define $\check{X}^\alpha_{t}:=\E[\log X_{t}^{\alpha }|\F^B_T]$, and note as with $Y_t$ above that $\check{X}^\alpha_{t}=\E[\log X_{t}^{\alpha }|\F^B_t]$, for $t \in [0,T)$. Setting
\begin{equation*}
\Sigma := \sigma ^{2} + \nu ^{2},
\end{equation*}
and noting that $(\xi ,\mu ,\sigma ,\nu ,\alpha )$, $W$ and $B$ are independent, we compute
\begin{equation}
d\check{X}^\alpha_{t}=\left( \overline{\mu \alpha }-\frac{1}{2}\overline{\Sigma
\alpha ^{2}}\right) dt+\overline{\sigma \alpha }dB_{t}, \label{def:check-X}
\end{equation}
where, again, we use the notation $\overline{M}=\E[M]$ for a generic integrable random variable $M$. In turn,
\begin{align}
dY_t = de^{\check{X}^\alpha_{t}}=Y_t\left( \eta dt+\overline{\sigma\alpha }dB_{t}\right) , \quad Y_0 = \overline{\xi}, \label{def:Ysde-power}
\end{align}
where $\eta :=\overline{\mu \alpha }-\frac{1}{2}(\overline{\Sigma \alpha ^{2}} - \overline{\sigma \alpha }^{2})$.

To solve the stochastic optimization problem \eqref{def:power-mfg-optimization}, we equivalently solve
\begin{equation}
\sup_{\pi \in \A_{\mathrm{MF}}}\E\left[ U(X_TY_T^{-\theta};\delta)\right] \label{pf:power-mfg-optimization-reduced}
\end{equation}
with
\begin{equation*}
dX_{t}=\pi _{t}X_{t}(\mu dt+\nu dW_{t}+\sigma dB_{t}),
\end{equation*}
and $(Y_t)_{t \in [0,T]}$ solving \eqref{def:Ysde-power}.
Then, as in the discussion of Section \ref{se:alternative-mfg}, the value of \eqref{pf:power-mfg-optimization-reduced} is equal to $\E[v(\xi,\overline{\xi},0)]$, where $v=v(x,y,t)$ is the unique smooth (strictly concave and strictly increasing in $x$) solution of the HJB equation
\begin{equation}
v_{t} + \max_{\pi \in \R}\left( \frac{1}{2}\Sigma \pi ^{2}x^{2}v_{xx}+\pi \left( \mu xv_{x}+\sigma \overline{\sigma \alpha }xyv_{xy}\right) \right) 
+\frac{1}{2}\overline{\sigma \alpha }^{2}y^{2}v_{yy}+\eta yv_{y}=0, \label{def:power-HJB}
\end{equation}
with terminal condition $v(x,y,T)=U(xy^{-\theta };\delta)$.
Notice that this HJB equation is random, because of its dependence on the $\F^{\mathrm{MF}}_0$-measurable type parameters.

Applying the first order conditions, the maximum in \eqref{def:power-HJB} is attained by 
\begin{align}
\pi^*(x,y,t) &= -\frac{\mu x v_x(x,y,t) + \sigma\overline{\sigma\alpha}xy v_{xy}(x,y,t)}{\Sigma x^2 v_{xx}(x,y,t)}. \label{pf:power-maximizes-hamiltonian}
\end{align}
In turn, equation \eqref{def:power-HJB} reduces to
\begin{align}
v_{t}-\frac{(\mu xv_{x}+\sigma \overline{\sigma \alpha }xyv_{xy})^{2}}{2\Sigma x^{2}v_{xx}} + \frac{1}{2}\overline{\sigma \alpha }^{2}y^{2}v_{yy} + \eta yv_{y}=0.\label{pf:power-HJB-reduced}
\end{align}
Next, we claim that, for all $(x,y,t)$,
\begin{align}
\pi^*(x,y,t) &= \Sigma^{-1}\left(\mu\delta - \theta(\delta-1)\sigma\overline{\sigma\alpha}\right). \label{pf:power-optimal-control}
\end{align}
We prove this in two cases:

\begin{enumerate}[(i)]
\item Suppose $\delta\neq 1$.
Making the ansatz
\begin{align*}
v(x,y,t) = U(xy^{-\theta};\delta)f(t) = \left(1-1/\delta\right)^{-1}x^{1-1/\delta}y^{-\theta(1-1/\delta)}f(t),
\end{align*}
reduces equation \eqref{pf:power-HJB-reduced} to $f'(t) + \rho f(t)=0$, with $f(T)=1$, where
\begin{equation*}
\rho :=\frac{\left( \mu \delta -\theta (\delta -1)\sigma \overline{\sigma\alpha }\right) ^{2}}{2\Sigma (\delta -1)}-\eta \theta (1-1/\delta)^{-1}+\frac{1}{2}\overline{\sigma \alpha }^{2}\theta (\theta +(1-1/\delta)^{-1}).
\end{equation*}
We easily deduce that the solution of \eqref{pf:power-HJB-reduced} is
\[
v(x,y,t) = \left(1-1/\delta\right)^{-1}x^{1-1/\delta}y^{-\theta(1-1/\delta)}\exp(\rho(T-t)),
\]
and that \eqref{pf:power-maximizes-hamiltonian} yields \eqref{pf:power-optimal-control}.
\item Suppose $\delta = 1$.
It is easily checked that the solution $v$ of \eqref{pf:power-HJB-reduced} is given by
 \begin{equation*}
v(x,y,t)=\log x-\theta \log y + \rho(T-t) ,
\end{equation*}
with 
\begin{equation*}
\rho :=\frac{\mu ^{2}}{2\Sigma }-\eta \theta +\frac{1}{2}\theta \overline{\sigma \alpha }^{2}.
\end{equation*}
In this case, \eqref{pf:power-maximizes-hamiltonian} becomes $\pi^*(x,y,t) = \mu/\Sigma$, which is consistent with \eqref{pf:power-optimal-control} for $\delta=1$.
\end{enumerate}

Recalling Definition \ref{def:MFE-power}, we see that for the candidate control $\alpha$ to be a constant MFE, we need $\alpha = \pi^*$. In light of \eqref{def:MFE-power}, $\pi^*$ is a constant MFE if it solves the equation
\begin{align}
\pi^* &= \Sigma^{-1}\left(\mu\delta - \theta(\delta-1)\sigma\overline{\sigma \pi^*}\right). \label{pf:power-optimal-control2}
\end{align}
Multiply both sides by $\sigma$ and average to find that $\overline{\sigma \pi^*}$ must satisfy
\begin{align*}
\overline{\sigma \pi^*} = \E\left[\delta\frac{\mu\sigma}{\Sigma}\right] - \E\left[\theta(\delta-1)\frac{\sigma^2}{\Sigma}\right]\overline{\sigma \pi^*} = \varphi - \psi\overline{\sigma \pi^*}.
\end{align*}
We then deduce that $\overline{\sigma\alpha} = \varphi/(1+\psi)$, and plugging this into \eqref{pf:power-optimal-control} we obtain \eqref{def:power-eq-control}.
\end{proof}

\subsection{Discussion of the equilibrium} \label{se:discussion-power}
Some of the structural properties of the equilibrium are similar to those observed in the CARA model  in Section \ref{se:discussion-exp}.
We again focus the discussion here on the mean field case of Theorem \ref{th:power-eq} and Corollary \ref{co:power-eq-single}, as the $n$-agent equilibria of Theorem \ref{th:power-eq-np} and Corollary \ref{co:power-eq-np} have essentially the same structure. The only difference is the rescaling of $\nu_k^2$ by $(1+(\delta_k-1)\theta_k/n)$ wherever it appears in Theorem \ref{th:power-eq-np}. We first discuss the general case of Theorem \ref{th:power-eq}, before concentrating on the single stock case of Corollary \ref{co:power-eq-single}.

\subsubsection{The general case}

The MFE $\pi^*$ of Theorem \ref{th:power-eq} can be written as the sum of two components, $\pi^* = \pi^*_{(1)}+\pi^*_{(2)}$, where $\pi^*_{(1)} = \delta\mu/(\sigma^2+\nu^2)$ is the classical Merton portfolio, and 
\[
\pi^*_{(2)} := - \theta (\delta -1)\frac{\sigma}{\sigma ^{2}+\nu ^{2}}\frac{ \varphi }{1+\psi }.
\]
The second component $\pi^*_{(2)}$ isolates the linear effect of the competition parameter $\theta$. Notably, $\pi^*_{(2)}$ vanishes when $\theta=0$.

Interestingly, the effect of competition is quite different in the CRRA model than in the CARA model, in the sense that competition now leads some agents to invest \emph{less} in the risky asset than they would in the absence of competition. Indeed, the sign of $\pi^*_{(2)}$ is the same as that of $(1-\delta)$, assuming $\theta > 0$ and $\sigma > 0$. Thus, agents with $\delta < 1$ invest \emph{more} as $\theta$ increases, whereas agents with $\delta > 1$ invest \emph{less}. In particular, we have $\pi^*_{(2)}=0$ when $\delta=1$; that is, agents with log utility are not competitive, which is also easily deduced from the original problem formulation.

In fact, a highly risk-tolerant and competitive agent may choose to short the stock. That is, if $\delta > 1$ and $\theta$ is close to $1$, $\pi^*$ may be negative. This typically occurs when $\delta$ is much higher than their population averages, or, in other words, when the representative agent is very risk tolerant and competitive relative to the other agents.

The representative agent's strategy $\pi^*$ is influenced by the other agents only through the  quantity $\varphi/(1+\psi) = \E[\sigma\pi^*]$, and, as in Section \ref{se:discussion-exp}, we can  view this quantity as the volatility of the aggregate wealth. Indeed, let $X^*$ denote the wealth process corresponding to $\pi^*$ (i.e., the solution of \eqref{def:X-MFG-power} using the strategy $\pi^*$). The geometric average wealth of the population at time $t \in [0,T]$ is $Y_t := \log \E[\exp(X^*_t) | \F^B_T]$, and, as we saw in the proof of Theorem \ref{th:power-eq}, it satisfies
\[
dY_t = Y_t\left(\eta dt + \E[\sigma\pi^*]dB_t\right).
\]

Alternatively, the ratio $\varphi/(1+\psi)$ can be interpreted directly in terms of the type distribution.
Define $R = \sigma^2/(\sigma^2+\nu^2)$, and note that $\varphi = \E[R\delta\mu/\sigma]$ and $\psi = \E[R\theta(\delta-1)]$. Notice that the assumptions on the parameter ranges ensure that $1+\psi > 0$. As before, the numerator $\varphi$ increases as the quality of the other stocks increases, as measured by their Sharpe ratio. However, the ratio $\varphi/(1+\psi)$ may not increase as the population becomes more risk tolerant (i.e., as $\delta$ increases on average), as both the numerator and denominator increase in this case.

The dependence of $\varphi/(1+\psi)$ and thus of $\pi^*$ on the type distribution is rather complex. The distribution of competition weights $\theta$ appears only through $\psi$, and its effect is mediated by the risk tolerance $\delta$. Loosely speaking, the population average of $\theta$ can have a positive or negative effect on $\pi^*_{(2)}$ depending on the ``typical'' sign of $(1-\delta)$. These complexities are more easily unraveled in the single stock case.

\subsubsection{Single stock case}

From the results of Corollary \ref{co:power-eq-single}, we may write the equilibrium portfolio in the single stock case as
\begin{align}
\pi^* =\left( (1-k\theta)\delta + k\theta\right)\frac{\mu }{\sigma ^{2}}, \label{def:pi*-k}
\end{align}
where 
\begin{align}
k := \frac{\bar{\delta}}{1 + \overline{\theta(\delta-1)}}. \label{def:deltafactor}
\end{align}

The equilibrium $\pi^*$ can thus be written  as a Merton portfolio, $\pi^*=\delta_{\mathrm{eff}}\mu/\sigma^2$, with 
\emph{effective risk tolerance parameter} 
\begin{align}
\delta_{\mathrm{eff}} := (1-k\theta)\delta + k\theta = \delta -\frac{\theta (\delta -1)\overline{\delta}}{1+\overline{\theta (\delta -1)}}. \label{def:delta_eff}
\end{align}
This representation simplifies some of the complex dependencies of $\pi^*$ on the type distribution mentioned in the previous paragraph. For instance, suppose $\theta$ and $\delta$ are uncorrelated, so that $\overline{\theta(\delta-1)} = \overline{\theta}(\overline{\delta}-1)$. If $\overline{\delta} > 1$, then $|\delta_{\mathrm{eff}}-\delta|$ is decreasing in $\overline{\theta}$. That is, if the average risk tolerance is high, then, as the population becomes more competitive (i.e. $\overline{\theta}$ increases), the representative agent behaves less competitively in the sense that $\delta_{\mathrm{eff}}$ moves closer to $\delta$. On the other hand, if $\overline{\delta} < 1$, then $|\delta_{\mathrm{eff}}-\delta|$ is increasing in $\overline{\theta}$. That is, if the average risk tolerance is low, then, as the population becomes more competitive, the representative agent behaves more competitively in the sense that $\delta_{\mathrm{eff}}$ moves away from $\delta$. Again, if $\overline{\delta}=1$, then $\overline{\theta}$ and $\theta$ play no role whatsoever.

More interesting is the joint effect of $(\delta,\overline{\theta})$ on $\pi^*$, when the other parameters are fixed. Still assuming $\theta$ and $\delta$ are uncorrelated, notice that the value of $k$ can range between $1$ and $\overline{\delta}$, as $\overline{\theta}$ varies between $0$ and $1$. Hence, if $\theta\overline{\delta} > 1$, then there is a critical value, $\overline{\theta}_{\mathrm{crit}} := (\theta\overline{\delta}-1)/(\overline{\delta}-1)$, at which the effect of $\delta$ on $\pi^*$ changes sign. When the population is highly competitive (i.e., $\overline{\theta} > \overline{\theta}_{\mathrm{crit}}$), the investment $\pi^*$ in the risky asset increases with the risk tolerance $\delta$, as one might expect. On the other hand, when the population is less competitive (i.e., $\overline{\theta} < \overline{\theta}_{\mathrm{crit}}$), $\pi^*$ is decreasing in $\delta$. This effect is illustrated in Figure \ref{fig:CRRA-delta-vs-thetabar}.

\begin{figure}[h]
\centering
\includegraphics[scale=0.7]{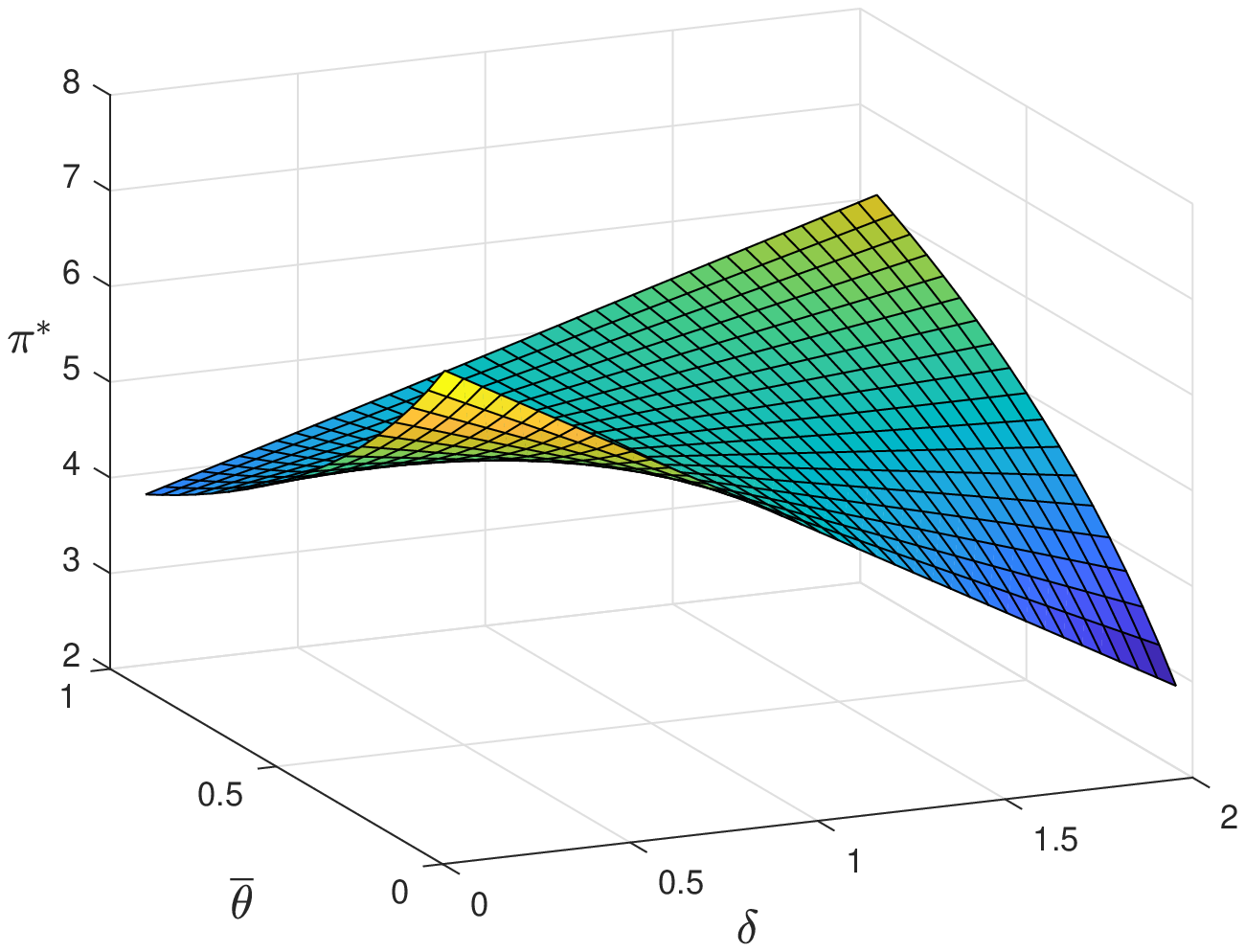}
\caption{Single stock case (Corollary \ref{co:power-eq-single}): $\pi^*$ versus $\delta$ and $\overline{\theta}$, with $\theta=3/4$, $\overline{\delta}=2$, $\mu=5$, and $\sigma=1$. Here, $\theta$ and $\delta$ are uncorrelated, and $\overline{\theta}_{\mathrm{crit}} = 1/2$}
\label{fig:CRRA-delta-vs-thetabar}
\end{figure}

A similar transition appears in the joint effect of $(\delta,\theta)$ on $\pi^*$, when the other parameters are fixed. When $k \le 1$ (which is equivalent to $\overline{\delta} \le 1$ if we assume $\theta$ and $\delta$ are uncorrelated), then the risky invesment $\pi^*$ is increasing in the risk tolerance $\delta$, for any value of $\theta$. On the other hand, if $k < 1$, then $\pi^*$ is increasing in $\delta$ if and only if $\theta < 1/k$. This situation is illustrated in Figure \ref{fig:CRRA-delta-vs-theta}. Note that these effects are more pronounced if $\delta$ and $\theta$ are positively correlated and less pronounced if they are negatively correlated.

\begin{figure}[h]
\centering
\includegraphics[scale=0.7]{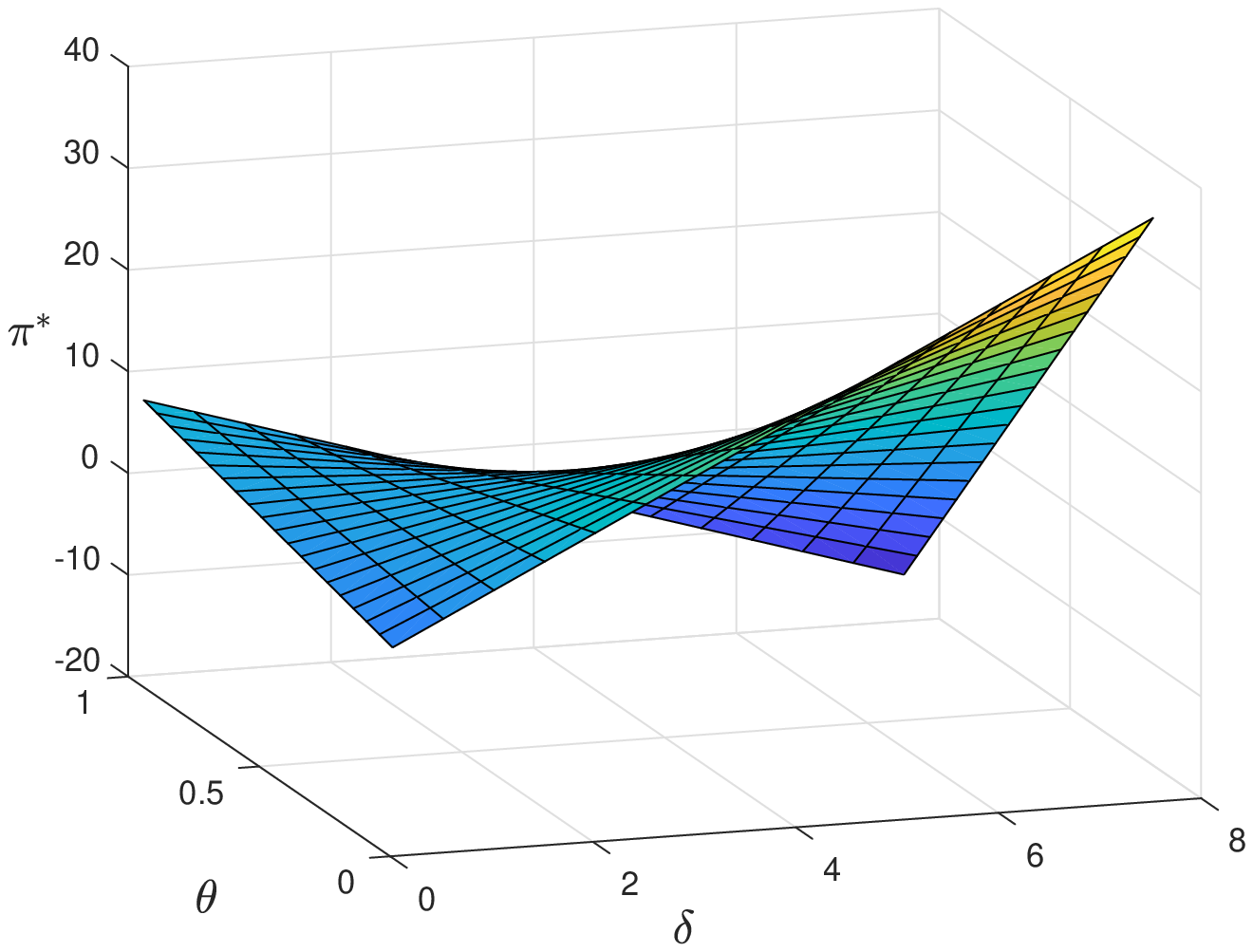}
\caption{Single stock case (Corollary \ref{co:power-eq-single}): $\pi^*$ versus $\delta$ and $\theta$, with $\overline{\theta}=1/5$, $\overline{\delta}=2$, and $\mu=5$, $\sigma=1$. Here $\delta$ and $\theta$ are uncorrelated, and $k = 5/3$.}
\label{fig:CRRA-delta-vs-theta}
\end{figure}

There are two ways to explain the counter-intuitive phenomenon described in the previous two paragraphs, in which $\pi^*$ may be decreasing in $\delta$ for certain (fixed) values of the other parameters. This regime happens precisely when $1 < k\theta$. Assuming again that $\theta$ and $\delta$ are uncorrelated, the latter inequality is equivalent to
\begin{align*}
\theta > \frac{1}{\overline{\delta}}(1 - \overline{\theta}) + \overline{\theta}.
\end{align*}
Assume for the sake of argument that $\overline{\delta}$ is extremely large. In the limit $\overline{\delta} \rightarrow\infty$, we see that $1 < k\theta$ if and only if  $\theta > \overline{\theta}$, and the expression for $\pi^*$ becomes
\begin{align*}
\pi^* =\left( \left(1-\frac{\theta}{\overline\theta}\right)\delta + \frac{\theta}{\overline\theta}\right)\frac{\mu }{\sigma ^{2}}.
\end{align*}
There are two immediate observations:
\begin{enumerate}[(i)]
\item If $\theta < \overline\theta$ and $\delta$ is large, then $\pi^*$ is positive and large. That is, less competitive agents go long.
\item If $\theta > \overline\theta$ and $\delta$ is large, then $\pi^*$ is very negative. That is, competitive agents go short.
\end{enumerate}
The first case, $\theta < \overline{\theta}$, is natural: A less competitive agent behaves more like a Merton investor, with $\pi^*$ increasing in $\delta$. On the other hand, we may explain the second regime, $\theta > \overline\theta$, as follows. Still assuming that the average risk tolerance is very large, we know from (i) that most other agents will take large long positions in the stock. If our representative agent were to go long, he could not afford to accept enough risk to go \emph{as long} as the other investors. Then, if the stock price goes up, the representative agent may achieve a high absolute performance but would suffer in terms of relative performance, as the other agents who invested even more in the stock would earn even higher returns. Hence, a natural strategy is to short the stock, to focus on gains from outperforming competitors when the stock price drops, rather than focusing on absolute performance. This is still reasonable if the representative agent is himself quite risk tolerant, willing to accept risk in the opposite direction as the more Merton-like investors. Note that these effects are less pronounced but still present outside of the asymptotic regime $\overline{\delta} \rightarrow \infty$. 

There is an alternative explanation in the spirit of \cite[pp. 13-14]{basak2015competition}. A risk averse agent typically seeks to minimize volatility by investing less in the stock than a more risk tolerant agent. However, relative performance concerns provide an additional source of volatility. A risk averse agent may then invest heavily in the stock in an attempt to mitigate losses from being outperformed.

\subsubsection{Some additional special cases}
A few other special cases are worth discussing. If $\sigma =0$ a.s., there is no common noise. In this case, $\varphi =\psi =0,$ and in turn the MFE is equal to the Merton portfolio, which means that the agents are not at all competitive.
On the other hand, if $\nu =0$ a.s., there is no independent noise. In this case, $\varphi=\E[\delta\mu/\sigma]$ and $\psi=\E[\theta(\delta-1)]$, and the optimal portfolio becomes
\[
\pi^* = \delta\frac{\mu}{\sigma^2} - \frac{\theta(\delta-1)}{\sigma(1+\E[\theta(\delta-1)])}\E\left[\delta\frac{\mu}{\sigma}\right].
\]
Lastly, if all agents have the same type vector (i.e., $\zeta$ is deterministic), then $\pi^*$ is deterministic and, furthermore,
\[
\pi^* = \frac{\delta\mu}{(1+\theta(\delta-1))\sigma^2+\nu^2}.
\]

\section{Conclusions and extensions} \label{se:conclusions}
We have considered optimal portfolio management problems under relative
performance concerns for both finite and infinite populations. The agents have a common investment horizon and either CARA or CRRA risk preferences, and they trade individual stocks with log-normal dynamics driven by both common and idiosyncratic noises. They face competition in that their individual utility criteria depend on both their
individual wealth as well as the wealth of the others. We have explicitly constructed the
associated constant Nash and mean field equilibria.

Our study points to several directions for future research. A first direction is to further analyze the finite population problems by using concepts from indifference valuation. Indeed, as we
mentioned in the proof of Theorem \ref{th:exp-eq-np}, we may identify the effect of competition as a liability and, in turn, solve an indifference valuation
problem.
Similarly, for the CRRA case, one may relate the
competition to a multiplicative liability factor.
There is a fundamental difference, of course, between the classical
indifference pricing problems and the ones herein; namely, the liability is
essentially endogenous, as it depends on the actions of the agents.
Nevertheless, employing indifference valuation arguments is expected to
yield a clearer financial interpretation of the equilibrium strategies by relating
them to indifference hedging strategies. It will also permit an analysis of
sensitivity effects of varying agents' population size using arguments
from the so-called relative indifference valuation. 
Such questions are left for future work.

Herein, the fund managers are only concerned with maximization of utility coming from terminal wealth (both absolute and relative to other agents), but one could also incorporate intermediate consumption. There are two natural ways to do this, discussed only for the CRRA model for concreteness.
First, one might add a utility-of-consumption term to the optimization criterion \eqref{def:Jpower} and modify the individual wealth process to 
\[
dX^i_t = \pi^i_tX^i_t(\mu_idt + \nu_idW^i_t + \sigma_idB_t) - C^i_tdt.
\]
While the calculations may be tedious, we expect that this problem is tractable. 
A more interesting approach would build on the first by incorporating \emph{relative consumption standards}, modeled as constraints in the form of lower bounds on the rate of consumption, which could themselves depend on the consumption of other agents.
This setting would reflect the more realistic situation in which individual consumption standards are affected by the behavior of other agents.

An important assumption of the model herein is that each agent has full information of the individual preference and market parameters of each other agent. This is also the main modeling
ingredient in \cite{basak2015competition} and is partially defended by the
fact that fund managers post their returns publicly, and from these announcements certain
information can in turn be inferred by their competitors. While this is
undoubtedly a considerable modeling limitation, our results give new
solutions to existing problems, especially for mean field games with
non-quadratic criteria. Furthermore, this assumption of common knowledge
may be relaxed if one introduces ambiguity around the publicly posted
competitors' parameters. This ambiguity may be, for example, modeled through
an error margin depending on individual views. This would give rise to a
class of interesting mean field games with filtering.

In a different direction, a natural generalization of our model would allow agents to invest in any of the stocks, not just the individual stock assigned to them. Such a case has been recently analyzed in \cite{basak2015competition}, and in \cite{anthropelos-geng-zariphopoulou} under forward performance criteria. Important questions arise on the effects of competition to asset specialization. While such generalization
might be intractable for the finite population setting, a mean field formulation
may provide a more tractable framework for studying the interactive role of
competition and asset familiarity, specialization and competition.

Finally, one may extend the current model to dynamically evolving markets
and rolling horizons. Such generalization may be analyzed under forward
performance criteria, extending the results of \cite{anthropelos-geng-zariphopoulou}, 
and would lead naturally to a new class of mean field games. It would also allow for further extending the concept of benchmarking under forward criteria, introduced in \cite{musiela2009portfolio}.

\subsection*{Acknowledgements}
The authors are grateful to Michalis Anthropelos, Mihai Sirbu, and especially Gon\c{c}alo dos Reis for their helpful comments. 
This work was presented at the 8$^\text{th}$ Western Conference on Mathematical Finance in Seattle in 2017; the International Workshop on SPDE, BSDE, and their Applications in Edinburgh in 2017; and the Conference on Kinetic Theory in Austin in 2017. The authors would like to thank the participants for fruitful comments and suggestions.

\bibliographystyle{amsplain}
\bibliography{Investment-MFG}

\end{document}